\newcommand\abs[1]{\lvert #1\rvert}
\newtheorem{THM}{Theorem}[section]
\newtheorem{LEM}[THM]{Lemma}
\newtheorem{COR}[THM]{Corollary}
\newtheorem{PROP}[THM]{Proposition}
\newtheorem{RULE}{Reduction Rule}
\theoremstyle{remark}
\theoremstyle{definition}
\def\K_#1{{K_{#1}}}
\def\S_#1{\overline{K_{#1}}}
\newcommand\cO{\mathcal{O}}
\newcommand\cal{\mathcal}
\newcommand{\BLOC}{\textsc{Block Graph Deletion}}
\newcommand{\disjointBLOC}{\textsc{Disjoint Block Graph Deletion}}
\newcommand{\YES}{\textsc{Yes}}
\newcommand{\NO}{\textsc{No}}
\begin{document}
\title[Block graph deletion]{A polynomial kernel for block graph deletion}
\author{Eun Jung Kim}
\address{LAMSADE, CNRS - Universit\'e Paris Dauphine, France}
\email{eunjungkim78@gmail.com}
\author{O-joung Kwon} 
\address{Institute for Computer Science and Control, Hungarian Academy of Sciences, Hungary}
\email{ojoungkwon@gmail.com}
\thanks{The second author is supported by ERC Starting Grant PARAMTIGHT (No. 280152).}
\thanks{An extended abstract appeared in 
  Proc. 10th International Symposium on Parameterized and Exact Computations, 2015~\cite{KimK2015}.}
\date{\today}
\begin{abstract}
In the \textsc{Block Graph Deletion} problem, we are given a graph $G$ on $n$ vertices and a positive integer $k$, and the objective is to check whether it is possible to delete at most $k$ vertices from $G$ to make it a block graph, i.e., a graph in which each block is a clique.
In this paper, we obtain a kernel with $\mathcal{O}(k^{6})$ vertices for the \textsc{Block Graph Deletion} problem.
This is a first step to investigate polynomial kernels for deletion problems into non-trivial classes of graphs of bounded rank-width, but unbounded tree-width. Our result also implies that \textsc{Chordal Vertex Deletion} admits a polynomial-size kernel on diamond-free graphs.
For the kernelization and its analysis, we introduce the notion of `complete degree' of a vertex. We believe that the underlying idea can be potentially applied to other problems. 
We also prove that the \textsc{Block Graph Deletion} problem can be solved in time $10^{k}\cdot n^{\mathcal{O}(1)}$.

\end{abstract}
\keywords{block graph, block decomposition}
\maketitle

\section{Introduction}\label{sec:introduction}

In parameterized complexity, an instance of a parameterized problem consists in a pair $(x,k)$, where $k$ is a secondary measurement, called the \emph{parameter}. 
A parameterized problem $Q\subseteq \Sigma^* \times N$ is \emph{fixed-parameter tractable} (\emph{FPT}) if there is an algorithm which decides whether $(x,k)$ belongs to $Q$ in time $f(k)\cdot \abs{x}^{\mathcal{O}(1)}$ for some computable function $f$. Such an algorithm is called an {\em FPT} algorithm. We call an FPT algorithm a {\em single-exponential} FPT algorithm if it runs in time $c^k\cdot \abs{x}^{\mathcal{O}(1)}$ for some constant $c$.
A parameterized problem is said to admit a \emph{polynomial kernel} if there is a polynomial time algorithm in $\abs{x}+k$, called a \emph{kernelization algorithm}, that reduces an input instance into an instance with size bounded by a polynomial function in $k$, while preserving the \YES/\NO\ answer.

Graph modification problems constitute a fundamental class of graph optimization problems. 
Typically, for a class $\Phi$ of graphs, a set $\Psi$ of graph operations  and a positive integer $k$, 
we want to know whether it is possible to transform an input graph into a graph in $\Phi$ by at most $k$ operations chosen in $\Psi$.
One of the most intensively studied graph modification problems is the \textsc{Feedback Vertex Set} problem. 
Given a graph $G$ and an integer $k$ as input, the \textsc{Feedback Vertex Set} problem asks whether $G$ has a vertex subset of size at most $k$ whose removal makes it a forest, which is a graph without cycles. 
The \textsc{Feedback Vertex Set} problem is known to admit an FPT algorithm~\cite{Bodlaender91, DowneyF93} and the running time has been subsequently improved by a series of papers~\cite{RamanSS02, KanjPS04, GuoGHNW06, DehneFLR07, ChenFLL08, CaoCL10, CyganNPP11,Kociumaka14}. Also, Thomass\'{e}~\cite{Thomasse2009} showed that it admits a kernel on $O(k^2)$ vertices. 

The \textsc{Feedback Vertex Set} problem has been generalized to deletion problems for more general graph classes. \emph{Tree-width}~\cite{RS1986a} is one of the basic parameters in graph algorithms and plays an important role in structural graph theory. Since forests are exactly the graphs of tree-width at most $1$, the natural question is to decide, for an integer $w\ge 2$, whether there is an FPT algorithm with parameter $k$ to find a vertex subset of size at most $k$ whose removal makes it a graph of tree-width at most $w$ (called \textsc{Tree-width $w$ Vertex Deletion}).
Courcelle's meta theorem~\cite{Courcelle90} implies that the \textsc{Tree-width $w$ Vertex Deletion} is FPT. Recently it is proved to admit a single-exponential FPT algorithm and a (non-uniform) polynomial kernel (a kernel of size $\mathcal{O}(k^{g(w)})$ for some function $g$)~\cite{FominLMS12, KLPRRSS13}.

On the other hand, there are interesting open questions related to two natural graph classes having tree-like structures. A graph is \emph{chordal} if it does not contain any induced cycle of length at least $4$. Chordal graphs are close to forests as a forest is a chordal graph without triangles. Marx~\cite{Marx2010} firstly showed that the \textsc{Chordal Vertex Deletion} problem is FPT, and Cao and Marx~\cite{CaoMc14} improved that it can be solved in time $2^{\mathcal{O}(k\log k)} \cdot n^{\mathcal{O}(1)}$. However, it remains open whether there is a single-exponential FPT algorithm or a polynomial kernel~\cite{Marx2010, CaoMc14}. Another interesting class is the class of \emph{distance-hereditary graphs}, also known as graphs of \emph{rank-width} at most $1$~\cite{Oum05}. As many problems are tractable on graphs of bounded rank-width by the meta-theorem on graphs of bounded rank-width (equivalently, bounded clique-width)~\cite{CourcelleMR00}, it is worth studying the general \textsc{Rank-width $w$ Vertex Deletion} problem. Again, it is known to be FPT from the meta-theorem on graphs of bounded rank-width~\cite{CourcelleMR00}, but for our knowledge, it is open whether there is a single exponential FPT algorithm or a polynomial kernel for this problem even for $w=1$.

\emph{Block graphs} lie in the intersection of chordal graphs and distance-hereditary graphs, and they contain all forests. 
A graph is a {\em block graph} if each block of it forms a clique. It is not difficult to see that block graphs are exactly those not containing an induced cycle of length at least $4$ and a diamond (i.e. a cycle of length $4$ with a single chord) as an induced subgraph. We study the following parameterized problem.

\smallskip
\noindent
\fbox{\parbox{0.97\textwidth}{
\textsc{Block Graph Deletion}\\
\textbf{Input :} A graph $G$, an integer $k$ \\
\textbf{Parameter :} $k$ \\
\textbf{Question :} Is there a vertex subset $S$ of $G$ with $\abs{S}\le k$ such that $G-S$ is a block graph? }}
\vskip 0.2cm

Our main results are stated in the next two theorems.

\begin{THM}\label{thm:main1}
The \BLOC\ admits a kernel with $O(k^{6})$ vertices.
\end{THM}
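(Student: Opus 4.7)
The plan is to follow the modulator-based kernelization framework that is standard for vertex-deletion problems. The first step is to obtain, in polynomial time, a \emph{modulator} $M \subseteq V(G)$ of size polynomial in $k$ such that $G - M$ is a block graph; this can be accomplished through an approximation algorithm for \BLOC\ or through iterative compression that reduces the task to the disjoint variant \disjointBLOC. Once $M$ is fixed, the remainder $G - M$ inherits the characteristic block-tree structure: every block is a clique, and distinct blocks meet only at cut vertices. This tree-like skeleton is the backbone along which reduction rules will be applied.

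With $M$ in hand, the analytic task is to bound the number of vertices in $V \setminus M$. A first structural observation is that because $G-M$ is diamond-free, for every block $B$ of $G-M$ and every $v \in M$, either $v$ is adjacent to at most one vertex of $B$ or $v$ is adjacent to all of $B$; a partial attachment of $v$ to a block $B$ of size at least three immediately produces an induced diamond on $\{v\}\cup B$. This dichotomy classifies vertices of $V\setminus M$ according to their attachment type to $M$. A false-twin or sunflower-style rule then bounds the number of block vertices sharing any fixed attachment type, and irrelevant-vertex arguments allow us to shorten long block-paths and prune pendant blocks of $G-M$ that do not participate in any obstruction involving $M$.

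The key new ingredient is the notion of \emph{complete degree} of a vertex $v \in M$, which measures the number of blocks of $G - M$ to which $v$ is completely adjacent. These are precisely the dangerous attachments: they link $v$ into the block tree and let induced $C_\ell$ (with $\ell \ge 4$) and diamonds be routed across multiple blocks. I would show that when the complete degree of $v$ exceeds a suitable $\mathrm{poly}(k)$ threshold, a sunflower over these complete attachments forces $v$ into every optimal solution, whereupon $v$ may be deleted and $k$ decremented; otherwise, the number of complete attachments charged to $v$ is already $\mathrm{poly}(k)$. Combined with the false-twin bound on partial attachments and the trimming of inactive parts of the block tree, this yields $\mathrm{poly}(k)$ vertices of $V \setminus M$ per modulator vertex, and multiplying by $\abs{M}$ produces the claimed $\mathcal{O}(k^6)$ bound.

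The main obstacle will be the interlocking regime in which a single block is partially touched by many modulator vertices while a single modulator vertex is completely adjacent to many blocks along the block tree. The positional case analysis, splitting attachments inside a block, at a cut vertex, along a block-path, and near a leaf block, and arguing that each reduction rule is safe for the \BLOC\ answer in every case, is the heart of the proof. The complete-degree notion is exactly what makes this analysis tractable, because it cleanly separates the active attachments (which can be forced into the solution via sunflowers) from the partial attachments (which are controlled via false twins), so that all remaining vertices can be charged back to individual modulator vertices within a $\mathrm{poly}(k)$ budget.
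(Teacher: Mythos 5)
There are two genuine gaps here. First, your opening step---computing in polynomial time a modulator $M$ of size $\mathrm{poly}(k)$ with $G-M$ a block graph---is not justified: iterative compression yields an FPT algorithm, not a polynomial-time one, and you supply no approximation algorithm for \BLOC\ (obstructions include arbitrarily long induced cycles, so greedily packing bounded-size obstructions and taking all their vertices does not produce a $\mathrm{poly}(k)$ modulator). The paper avoids this entirely: its reduction rules are applied to $G$ directly, and the modulator-like set $S$ (a hypothetical optimal solution of size at most $k$) appears only in the \emph{analysis}, where the contracted block tree of $G-S$ is two-phase colored red/blue to show that a sufficiently large reduced \YES-instance must contain either a reducible block--cut-vertex path (Reduction Rule~4) or a vertex of large complete degree. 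Note also that the paper's complete degree is defined intrinsically via Gallai's $A$-path theorem---the number of clique components remaining in $N(v)$ after deleting a set $S_v$ of at most $7k$ vertices hitting all obstructions through $v$---rather than relative to a modulator.

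Second, and more seriously, your treatment of a vertex $v$ of large complete degree is incorrect. Many complete attachments of $v$ do not yield $k+1$ obstructions pairwise meeting only in $v$: all obstructions through $v$ may be destroyed by deleting a few vertices of the small set $S_v$ (for instance, if every clique attached to $v$ also hangs off a single vertex $x\in S_v$). Hence $v$ is not forced into every optimal solution, and it cannot simply be deleted with $k$ decremented. This is exactly the dichotomy from Thomass\'e's feedback vertex set kernel that you invoke but then collapse: the flower case does force a deletion (the paper's Rule~5), but the residual case requires the $3$-expansion lemma and a gadget rule (the paper's Rule~6) that removes the edges from $v$ to the attached components and installs two internally disjoint length-two paths from $v$ to each $x\in X$, encoding that either $v$ or all of $X$ must enter the solution; proving this rule safe, and showing that it strictly decreases $n+m^*$ so the process terminates, is the technical heart of the kernelization. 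Without it, your charging argument and the final $O(k^6)$ count do not go through.
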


\begin{THM}\label{thm:main2}
The \BLOC\ can be solved in time $10^k\cdot n^{\mathcal{O}(1)}$.
\end{THM}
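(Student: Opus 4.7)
The plan is to use iterative compression, the standard technique for single-exponential FPT algorithms for vertex deletion problems. We build up $G$ one vertex at a time in an arbitrary order, maintaining at each stage a solution of size at most $k+1$ for the current induced subgraph. When a newly added vertex pushes the solution to size $k+1$, a \emph{compression} subroutine either finds a solution of size at most $k$ or reports that none exists. Compression itself reduces to \disjointBLOC: enumerate all subsets $W'\subseteq W$ (the intersection of the target size-$k$ solution with $W$), and for each solve \disjointBLOC\ on $(H-W',W\setminus W',k-|W'|)$, which asks for $S\subseteq V(H)\setminus W$ of size at most $k-|W'|$ such that $(H-W')-S$ is a block graph. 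Note that $(H-W')-(W\setminus W')=H-W$ is a block graph by the invariant, so this instance is well-posed.

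The core of the algorithm is a branching procedure for \disjointBLOC\ on $(G',U,k')$ with $G'-U$ a block graph. After exhaustively applying reduction rules (for instance, removing vertices of $V(G')\setminus U$ that lie in no obstruction, and forcing vertices whose removal is unavoidable), one locates in polynomial time an obstruction $F\subseteq G'$---an induced $C_\ell$ for some $\ell\geq 4$ or a diamond---having at most nine vertices in $V(G')\setminus U$, and branches on which such vertex to place in $S$. Each of the at most nine branches decreases $k'$ by one, giving a running time of $9^{k'}\cdot n^{O(1)}$ for \disjointBLOC. Summing over all guesses of $W'\subseteq W$,
\[
\sum_{W'\subseteq W} 9^{k-|W'|}\cdot n^{O(1)} \;=\; \frac{(1+9)^{k+1}}{9}\cdot n^{O(1)} \;=\; O(10^k\cdot n^{O(1)}),
\]
so each compression call costs $O(10^k\cdot n^{O(1)})$, and the overall algorithm, performing at most $n$ compressions, runs in time $10^k\cdot n^{O(1)}$.

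The main obstacle is showing that after the reduction rules, an obstruction with at most nine non-$U$ vertices always exists. The difficulty is that $G'$ may contain induced cycles of arbitrary length, and a priori the only obstruction could be a very long hole. However, because $G'-U$ is a block graph with a rigid block decomposition and unique shortest paths between any two vertices in the same component, a ``shortcut'' argument---replacing each maximal $U$-segment of a long induced cycle by the unique shortest path in the block decomposition---should produce a short induced obstruction through the same vertices of $W\cap V(F)$. The delicate technical point is to design reductions so that this shortcutting always terminates with a certificate whose intersection with $V(G')\setminus U$ is bounded by the target constant of nine; getting the constant exactly right (and hence the branching factor at most nine, ensuring the $(1+9)^{k+1}$ bound above) is where the main combinatorial work will lie.
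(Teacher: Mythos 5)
The outer shell of your plan is the same as the paper's: iterative compression, guessing the intersection $I$ of the new solution with the old one, solving \disjointBLOC\ on $(G-I,S\setminus I,k-\abs{I})$, and the binomial sum $\sum_i\binom{k+1}{i}9^{k-i}=O(10^k)$ is exactly the computation the paper performs. The gap is in the inner subroutine for \disjointBLOC. You defer the whole difficulty to the claim that, after reduction rules, one can always find an obstruction with at most nine vertices outside $U$, and you offer only a ``shortcutting'' sketch for it. That claim is not just unproven --- it is false. Let $G'$ be a single induced cycle $u_1v_1v_1'u_2v_2v_2'\cdots u_mv_mv_m'$ with each $u_i\in U$ and each $v_i,v_i'\notin U$. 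Then $G'[U]$ is edgeless and $G'-U$ is a disjoint union of edges, so both are block graphs; the unique obstruction is the entire cycle, and it contains $2m$ vertices outside $U$. No rule that deletes vertices lying in no obstruction applies, and your shortcutting operation replaces $U$-segments by paths inside the block structure of $G'-U$, which cannot decrease the number of \emph{non-}$U$ vertices on the cycle --- the only quantity you need to bound. So there is no constant $c$ such that a minimal obstruction always has at most $c$ vertices outside $U$, and a branching rule that always charges a unit decrease of $k'$ with constant branching factor cannot be extracted this way.

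The paper circumvents exactly this configuration by changing the measure rather than bounding the obstruction, following Chen et al.'s feedback vertex set algorithm. Its procedure \textbf{Block} tracks $k+\ell$, where $\ell$ is the number of connected components of $G[S]$, and uses three rules: a three-way \texttt{Small Set Branching} on any three vertices $u,v,w\notin S$ with $G[S\cup\{u,v,w\}]$ not a block graph (each branch decreases $k$); a three-way \texttt{Component Branching} on an edge $uv$ of $G-S$ whose endpoints see two distinct components of $G[S]$ --- delete $u$, delete $v$, or move \emph{both} into $S$, which merges components and so decreases $\ell$ without touching $k$; and a non-branching \texttt{Bypass Rule} on leaf blocks. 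In the cycle example above, it is the third branch of \texttt{Component Branching} that makes progress, which is precisely the move your scheme lacks. This gives $O(3^{k'+\ell}\cdot n^6)$ for \disjointBLOC, and since $\ell\le\abs{S}\le k+1$ this is $O(9^{k'}\cdot n^6)$; the same binomial sum then yields $10^k\cdot n^{O(1)}$. (That your target branching factor of nine coincides with the paper's effective $9^{k'}$ is an accident of the arithmetic, not evidence that a nine-way obstruction branching exists.) To repair your proof you would either need to adopt such a component-counting measure, or prove a structural theorem that the alternating-cycle example shows cannot hold in the form you state it.
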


Our kernelization is motivated by the quadratic vertex-kernel by Thomass\'{e}~\cite{Thomasse2009}. In~\cite{Thomasse2009}, basic reduction rules are applied so that whenever the size of the instance is still large, there must be a vertex of large degree (otherwise, it is a \NO-instance). Then a vertex $v$ of large degree witnesses either so-called the {\em sunflower} structure, or the {\em 2-expansion} structure. Our kernelization employs a similar strategy. In order to work with block graphs instead of forests, we come up with the notion of the {\em complete degree} of a vertex, which replaces the role of the usual degree of a vertex in {\sc Feedback Vertex Set}. Also, we need to bound the size of a block which might appear in a block graph $G-S$, if such a set $S$ of size at most $k$ exists. Our single-exponential algorithm is surprisingly analogous to the algorithm of Chen. {\em et al.}~\cite{ChenFLL08} for {\sc Feedback Vertex Set} although the analysis is non-trivial. 

Since block graphs are exactly diamond-free chordal graphs, we have the following as a corollary of Theorem~\ref{thm:main1} and Theorem~\ref{thm:main2}.

\begin{COR}\label{corr:chordal}
On diamond-free graphs, {\sc Chordal Vertex Deletion} admits a kernel with $O(k^{6})$ vertices and can be solved in time $O(10^k\cdot n^{\mathcal{O}(1)})$.
\end{COR}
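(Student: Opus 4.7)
The plan is to reduce the corollary to a direct invocation of Theorems~\ref{thm:main1} and~\ref{thm:main2} by observing that on diamond-free graphs, \textsc{Chordal Vertex Deletion} coincides with \BLOC. Recall from the introduction that the block graphs are exactly the graphs that are both chordal and diamond-free. Since the property of being diamond-free is hereditary, for any diamond-free graph $G$ and any $S\subseteq V(G)$ the induced subgraph $G-S$ remains diamond-free; hence $G-S$ is chordal if and only if $G-S$ is a block graph. Consequently, an instance $(G,k)$ of \textsc{Chordal Vertex Deletion} with $G$ diamond-free is a \YES-instance if and only if $(G,k)$ is a \YES-instance of \BLOC.

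From this equivalence, applying the kernelization of Theorem~\ref{thm:main1} to $(G,k)$ yields an instance on $\mathcal{O}(k^{6})$ vertices whose \YES/\NO\ answer matches that of the original \textsc{Chordal Vertex Deletion} question, and the algorithm of Theorem~\ref{thm:main2} decides $(G,k)$ in time $10^{k}\cdot n^{\mathcal{O}(1)}$. There is essentially no obstacle here; the only subtlety worth flagging is that the output of the kernelization, viewed as a \BLOC\ instance, need not itself be diamond-free. This does not affect the corollary: one simply treats the kernel as a \BLOC\ instance whose \YES/\NO\ answer, by the equivalence above, coincides with the correct answer for the original diamond-free \textsc{Chordal Vertex Deletion} input.
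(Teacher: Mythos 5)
Your proposal is correct and matches the paper's reasoning: the corollary is derived exactly from the observation that block graphs are the diamond-free chordal graphs, so on diamond-free inputs the two deletion problems coincide and Theorems~\ref{thm:main1} and~\ref{thm:main2} apply directly. Your extra remark that the kernel output should be read as a \BLOC\ instance (since it need not stay diamond-free) is a sensible clarification that the paper leaves implicit.
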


\textbf{Update.} 
After this paper was presented at IPEC 2015, Agrawal et al.~\cite{AgrawalKLS2015} announced improvements of both results in the paper. 
Based on all of our reduction rules, they obtained a kernel with $\cO (k^4)$ vertices using a $4$-approximation algorithm for \BLOC.
For an FPT algorithm, they developed an $3.618^k \cdot n^{\mathcal{O}(1)}$ time algorithm for \textsc{Weighted Feedback Vertex Set}, and 
using a reduction from \BLOC\ to \textsc{Weighted Feedback Vertex Set} on graphs with no induced cycle of length $4$ and the diamond, they obtained an $4^k \cdot n^{\mathcal{O}(1)}$ time algorithm for the problem.

\section{Preliminaries}

All graphs considered in this paper are undirected and simple (without loops and parallel edges). 
For a graph $G$, we denote by $V(G)$ and $E(G)$ the vertex set and the edge set of $G$, respectively.
When we analyze the running time of an algorithm, we agree that $n=\abs{V(G)}$.

Given a graph $G$, a vertex $u$ is a {\em neighbor} of a vertex $v$ if $uv \in E(G)$. The {\em neighborhood} of a vertex set $X$ in $G$ is the set $\{u\in V(G):uv \in E(G) \text{~for some~}v\in X\}$ and denoted as $N_G(X)$, or simply $N(X)$. If $X$ consists of a single vertex $x$, then we write $N_G(\{x\})$ as $N_G(x)$. For two vertex sets $X,Y \subseteq V(G)$, we refer to the set $X\cap N_G(Y)$ by $N_X(Y)$. For $X\subseteq V(G)$, the set of vertices in $X$ having a neighbor in $V(G)\setminus X$ is denoted as $\partial_G(X)$. For $X\subseteq V(G)$, the graph obtained by deleting the vertices $X$ from $G$ is written as $G-X$. The same applies to an edge set. When $X$ is a single vertex $x$ or an edge $e$, we simply write $G-x$ and $G-e$, respectively. 
 A vertex $v$ of $G$ is called a {\em cut vertex} if the removal of $v$ from $G$ strictly increases the number of connected components. 
 A maximal connected subgraph of a graph without a cut vertex is called a {\em block} of it.
 Note than an edge can be a block.
A graph $G$ is \emph{$2$-connected} if $\abs{V(G)}\ge 3$ and it has no cut vertex.

A {\em block tree} ${\cal T}_G$ of a graph $G$ is the graph having ${\cal B}\cup {\cal C}$ as the vertex set, where ${\cal B}$ is the set of all blocks of $G$ and ${\cal C}$ is the set of all cut vertices of $G$, and there is an edge $Bc\in E({\cal T}_G)$ between $B\in {\cal B}$ and $c\in {\cal C}$ if and only if the cut vertex $c$ belongs to the block $B$ in $G$. The constructed graph does not contain a cycle. We say that a graph is a \emph{block graph obstruction}, or simply an {\em obstruction}, if it is isomorphic to a diamond, or an induced cycle $C_{\ell}$ of length $\ell$ for some $\ell\geq 4$. A vertex is {\em simplicial} in $G$ if $N_G(v)$ is a complete graph.

\section{Complete degree of a vertex}
We define a concept called the \emph{complete degree} of a vertex in a graph.
The definition of the complete degree is motivated by the following lemma, whose proof is deferred at the end of this section.

\begin{PROP}\label{prop:generalcompletedegree}
Let $G$ be a graph and let $v\in V(G)$ and let $k$ be a positive integer.
Then in $\mathcal{O}(kn^3)$ time, we can find either
\begin{enumerate}
\item $k+1$ obstructions that are pairwise vertex-disjoint, or
\item $k+1$ obstructions whose pairwise intersections are exactly the vertex $v$, or
\item $S_v\subseteq V(G)$ with $\abs{S_v}\le 7k$ such that $G-S_v$ has no obstruction containing $v$. 
\end{enumerate}
\end{PROP}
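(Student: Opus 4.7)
The plan is an iterative greedy procedure. Initialize $\mathcal{F}=\emptyset$ (a family of obstructions through $v$ that will pairwise intersect only at $\{v\}$) and $S_v=\emptyset$. In each round, search for an obstruction $O$ through $v$ in $G-S_v$; if one is found, add $O$ to $\mathcal{F}$ and put $V(O)\setminus\{v\}$ into $S_v$. As soon as $|\mathcal{F}|=k+1$, output case (2), since by construction the members of $\mathcal{F}$ pairwise intersect in exactly $\{v\}$. If some round finds no obstruction through $v$ in $G-S_v$, output $S_v$ as the hitting set for case (3).

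To meet the $|S_v|\le 7k$ budget, each obstruction added to $\mathcal{F}$ should have at most $8$ vertices. In each round I would therefore look for a shortest obstruction through $v$: first check for diamonds through $v$ by enumerating triples of vertices in $N(v)$ (time $\mathcal{O}(n^3)$); then check for induced $C_4$'s through $v$ similarly; finally, for each non-adjacent pair $(a,b)\in N_{G-S_v}(v)$, run a BFS in $G-S_v-v-(N(v)\setminus\{a,b\})$ to find the shortest induced $(a,b)$-path, yielding the shortest induced cycle through $v$ with $a,b$ as its neighbors of $v$ on the cycle. Pick the overall shortest obstruction; if it has size at most $8$, adding it contributes at most $7$ vertices to $S_v$.

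The principal obstacle is the case where the shortest obstruction through $v$ in $G-S_v$ is an induced cycle of length at least $9$; adding all its non-$v$ vertices would blow the $7k$ bound. Here I would argue that the structural richness implied by such a long cycle forces one of the following alternatives: (i) the long cycle together with internally vertex-disjoint parallel routes around $v$ yields $k+1$ pairwise vertex-disjoint obstructions in $G$ (case 1); (ii) BFS from $v$ reveals $k+1$ pairwise $\{v\}$-intersecting long cycles through $v$ via alternate branches (case 2); or (iii) a small constant-size cut around $v$ (e.g., the two neighbors of $v$ on the cycle plus a bottleneck separator) can be added to $S_v$ in place of the full cycle, keeping the amortized cost at most $7$ per member of $\mathcal{F}$. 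Sorting out this long-cycle subcase, together with the precise $7k$ bookkeeping, is the delicate part of the argument.

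The running time bound $\mathcal{O}(kn^3)$ is immediate from the skeleton: at most $k+1$ rounds, and each round does $\mathcal{O}(n^3)$ work to enumerate triples in $N(v)$ and to run the BFS searches over all non-adjacent pairs in $N(v)$.
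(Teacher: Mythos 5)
Your skeleton --- greedily accumulate obstructions through $v$ that pairwise meet only in $v$, and output the union of their non-$v$ vertices as $S_v$ if fewer than $k+1$ are found --- only works if each obstruction added costs $O(1)$ vertices, and you correctly identify the long-induced-cycle case as the obstacle. But the resolution you sketch does not go through. Alternatives (i) and (ii) are false in general: a graph consisting of a single long induced cycle through $v$ attached to an otherwise block-graph-like structure has essentially one obstruction, so no amount of ``structural richness'' of that cycle yields $k+1$ disjoint obstructions or $k+1$ cycles pairwise meeting in $v$. Alternative (iii) is the real content of the proposition, and it cannot be obtained by a local argument about one cycle: you need a global packing--covering duality guaranteeing that if there do not exist many obstructions through $v$ pairwise meeting only in $v$, then some set of $O(k)$ vertices hits \emph{all} obstructions through $v$ simultaneously. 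Your greedy loop provides no such guarantee --- after you delete $7$ vertices of a length-$20$ induced cycle, the surviving $13$ can keep combining with the rest of the graph to form fresh obstructions through $v$, round after round, without the process ever certifying outcome (2).

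The paper obtains exactly this duality from Gallai's $A$-path theorem. Setting $G_1:=(G-v)-E(G[N_G(v)])$ and $A:=N_G(v)$, every obstruction through $v$, except a diamond in which $v$ has degree $3$, projects to an $A$-path in $G_1$, and conversely every $A$-path $P$ in $G_1$ yields an obstruction inside $G[\{v\}\cup V(P)]$ (which may or may not contain $v$). Gallai's theorem returns, in $\mathcal{O}(kn^2)$ time, either $2k+1$ vertex-disjoint $A$-paths --- whence by pigeonhole either $k+1$ pairwise disjoint obstructions or $k+1$ obstructions pairwise meeting exactly in $v$ --- or a set $X$ of at most $4k$ vertices hitting all $A$-paths; adding a maximal packing of at most $k$ vertex-disjoint induced $P_3$'s in $G[N_G(v)]$ (another $3k$ vertices, which disposes of the degree-$3$ diamonds) gives the $7k$ bound. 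This duality theorem is the ingredient missing from your argument. (A smaller issue: enumerating triples inside $N(v)$ misses diamonds in which $v$ is a degree-$2$ vertex, since the fourth vertex of such a diamond lies outside $N(v)$.)
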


For a graph $G$ and $v\in V(G)$ such that $G$ has no $k+1$ vertex-disjoint obstructions and has no $k+1$ obstructions whose pairwise intersections are exactly the vertex $v$, 
the \emph{complete degree} of $v$ is defined as the minimum number of components of $G- (S_v\cup \{v\})$ among all possible $S_v\subseteq V(G)\setminus \{v\}$ where
\begin{itemize}
\item $\abs{S_v}\le 7k$, and
\item $G-S_v$ has no block graph obstruction containing $v$.
\end{itemize}
Note that if $G-S_v$ has no block graph obstruction containing $v$, then $G[N_G(v)\setminus S_v]$ is a disjoint union of complete graphs.

To prove Proposition~\ref{prop:generalcompletedegree}, we use the Gallai's $A$-path theorem.
For a graph $G$ and $A\subseteq V(G)$, 
an \emph{$A$-path} of $G$ is a path of length at least $1$ whose end vertices are in $A$, and all internal vertices
are in $V(G)\setminus A$. 

\begin{THM}[Gallai~\cite{Gallai1961}]\label{thm:gallaitheorem}
Let $G$ be a graph and let $A\subseteq V(G)$ and let $k$ be a positive integer.
Then, in $\mathcal{O}(kn^2)$ time, we can find either
\begin{enumerate}
\item $k+1$ vertex-disjoint $A$-paths, or
\item $X\subseteq V(G)$ with $\abs{X}\le 2k$ such that  $G- X$ has no $A$-paths.
\end{enumerate}
\end{THM}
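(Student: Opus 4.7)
The plan is to prove this classical theorem of Gallai by reducing the $A$-path packing problem to a constrained matching problem in an auxiliary graph, then appealing to Tutte--Berge type structure to extract a hitting set.

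First, I would construct an auxiliary graph $H$ from $G$ by splitting each vertex $a \in A$ into $\deg_G(a)$ fresh ``stub'' copies (one per incident edge of $a$), keeping vertices of $V(G) \setminus A$ intact, and attaching a small matching gadget on the stubs of each $a$. The design is chosen so that vertex-disjoint $A$-paths in $G$ correspond bijectively to certain (constrained) matchings in $H$; in particular, $\nu(G, A)$ equals the appropriate matching number in $H$.

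Next, I would run an augmenting-path procedure (in the style of Edmonds' blossom algorithm) on $H$, performing at most $k+1$ augmentations, each in $\mathcal{O}(n^2)$ time. If the packing reaches size $k+1$, we return the corresponding $k+1$ vertex-disjoint $A$-paths, yielding the first outcome. Otherwise the procedure terminates with a maximum packing of size $\nu \le k$, together with a certificate of maximality in the form of a blossom/barrier structure.

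From the barrier in $H$ I would extract, via the Tutte--Berge formula, a set $X \subseteq V(G)$ with $\abs{X} \le 2\nu \le 2k$ whose removal destroys every $A$-path of $G$, giving the second outcome. The total cost is $\mathcal{O}(k)$ augmentations at $\mathcal{O}(n^2)$ each, matching the claimed $\mathcal{O}(kn^2)$ bound.

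The main obstacle is designing the auxiliary graph and its gadgets so that (i) every maximum matching in $H$ decodes unambiguously to a family of vertex-disjoint $A$-paths in $G$, and (ii) a Tutte--Berge barrier in $H$ lifts back to a hitting set in $G$ of the same cardinality, preserving the factor $2$. This tight constant is Gallai's original insight and is the reason a naive greedy argument on $G$ (e.g.\ simply taking the $A$-endpoints of a maximum packing) does not suffice: shared non-$A$ ``hub'' vertices can carry many $A$-paths at once, so the barrier must be read off carefully from the matching structure rather than from the packing itself.
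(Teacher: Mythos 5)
The paper never proves Theorem~\ref{thm:gallaitheorem}: it is imported wholesale from Gallai~\cite{Gallai1961} (in an algorithmic form that is folklore in the FPT literature) and used as a black box inside the proof of Proposition~\ref{prop:generalcompletedegree}. So your proposal can only be measured against the standard proof, whose top-level shape you correctly guess (reduce $A$-path packing to matching, augment at most $k+1$ times, read a hitting set off the optimality certificate). However, both places where the mathematical content actually sits are wrong or absent. The auxiliary graph you describe cannot work: an $A$-path of length at least two uses \emph{two} edges at every internal vertex, and internal vertices lie in $V(G)\setminus A$ --- exactly the side you keep intact --- so the edge set of a path packing is never a matching in your $H$, no matter what gadget you hang on the $A$-stubs. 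Your fallback to ``constrained matchings'' does not rescue this, because Edmonds' augmenting-path algorithm and the Tutte--Berge formula apply to ordinary matchings; with genuine constraints the augmentation step and the barrier step of your plan are no longer available. Gallai's reduction does the opposite of what you propose: it leaves $A$ untouched and \emph{doubles} each $v\in V(G)\setminus A$ into $v,\tilde v$, adding the edge $v\tilde v$, the edge $\tilde u\tilde v$ for every $uv\in E(G-A)$, and joining each $a\in A$ to both copies of each neighbour outside $A$; then the two path-edges at an internal vertex are absorbed by $v$ and $\tilde v$, idle vertices are matched to their copies, and $\nu_{\mathrm{path}}(G,A)=\nu(H)-\abs{V(G)\setminus A}$ (note the offset: the packing number is not the matching number itself).

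The second gap is the constant $2$. Your claim that a Tutte--Berge barrier of $H$ ``lifts back to a hitting set in $G$ of the same cardinality'' is false, and a triangle with all three vertices in $A$ already refutes it: the packing number is $1$, yet no single vertex meets all $A$-paths. What the Gallai--Edmonds structure of $H$ actually yields is the min--max formula
\[
\nu_{\mathrm{path}}(G,A)\;=\;\min_{X\subseteq V(G)}\Bigl(\abs{X}+\sum_{C}\bigl\lfloor \abs{A\cap V(C)}/2 \bigr\rfloor\Bigr),
\]
where $C$ ranges over the components of $G-X$. The hitting set is \emph{not} the minimizing $X$ alone; it is $X$ together with all but one vertex of $A\cap V(C)$ in each component $C$ (after which every component contains at most one $A$-vertex, hence no $A$-path), and the bound
\[
\abs{X}+\sum_{C}\bigl(\abs{A\cap V(C)}-1\bigr)\;\le\; 2\Bigl(\abs{X}+\sum_{C}\bigl\lfloor \abs{A\cap V(C)}/2 \bigr\rfloor\Bigr)\;\le\; 2k
\]
is precisely where the factor $2$ in outcome (2) comes from. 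Your write-up contains neither the formula nor this accounting, so the second alternative of the theorem is not established. The augmentation and running-time part of your sketch is fine once the correct reduction is in place: start from the matching $\{v\tilde v : v\in V(G)\setminus A\}$, augment at most $k+1$ times at $\mathcal{O}(n^2)$ per augmentation, and if no augmenting path exists, extract $X$ and the components from the final search structure.
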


\begin{proof}[Proof of Proposition~\ref{prop:generalcompletedegree}] 
Let $G_1:=(G-v)- E(G[N_G(v)])$. By Theorem~\ref{thm:gallaitheorem}, 
 we can find in time $\mathcal{O}(kn^2)$ either
\begin{enumerate}
\item $2k+1$ vertex-disjoint $N_G(v)$-paths in $G_1$, or
\item $X\subseteq V(G)$ with $\abs{X}\le 4k$ such that  $G_1- X$ has no $N_G(v)$-paths.
\end{enumerate}
Suppose that $G_1$ contains at least $2k+1$ pairwise vertex-disjoint $N_G(v)$-paths. Let $P$ be one of these $N_G(v)$-paths in $G_1$ with $p$ and $q$ as its end vertices, and let $P'$ be a shortest $p,q$-path in $G_1[V(P)]$. Note that $P'$ has length at least $2$. If $P'$ has length $2$, then $G[\{v\}\cup V(P')]$ is isomorphic to either $C_4$ or the diamond depending on the adjacency between $p$ and $q$ in $G$. If $P'$ has length at least $3$ and $pq\in E(G)$, then $G[V(P')]$ is an induced cycle of length at least $4$. If $P'$ has length at least $3$ and $pq\notin E(G)$, then $G[\{v\}\cup V(P')]$ is an induced cycle of length at least $5$. Thus, $G[\{v\}\cup V(P)]$ contains an obstruction, and $G$ contains either disjoint $k+1$ obstructions, or $k+1$ obstructions whose pairwise intersections are exactly $v$. 

So, we may assume that there exists $X\subseteq V(G_1)$ with $\abs{X}\le 4k$ such that $G_1- X$ has no $N_G(v)$-paths. Now, we greedily find a maximal set $\mathcal{P}$ of vertex-disjoint induced $P_3$ in $G[N_G(v)]$ by searching vertex subsets of size $3$. 
If there are $k+1$ vertex-disjoint induced $P_3$'s, then $G$ has $k+1$ diamonds whose pairwise intersections are exactly $v$. Otherwise, we set $S_v=X\cup \bigcup_{P\in \mathcal{P}} V(P)$ and notice that $\abs{S_v}\leq 7k$. Observe that $G-S_v$ has no block graph obstruction containing $v$.
Clearly, we can find $\mathcal{P}$ in time $\mathcal{O}(kn^3)$.
\end{proof}

In our algorithm, we need to find a vertex of sufficiently large complete degree and the corresponding deletion set $S_v$ in polynomial time. However, we just need sufficiently many complete graphs on the neighborhood, and do not need to compute the complete degree of each vertex exactly. The following lemma will be used to analyze the difference between an optimal set and an arbitrary set $S_v$ obtained by Proposition~\ref{prop:generalcompletedegree}.
\begin{LEM}\label{lem:countingcomponents}
Let $G$ be a graph and let $S_1, S_2\subseteq V(G)$ such that for each $1\le i\le 2$, $G- S_i$ is a disjoint union of complete graphs.
If $\abs{S_2}\le k$, then the number of components of $G- S_2$ is at least the number of components of $G- S_1$ minus $k$.
\end{LEM}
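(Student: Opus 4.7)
The plan is to factor the comparison through the intermediate graph $G' := G - (S_1 \cup S_2)$. Note that $G'$ is an induced subgraph of both $G - S_1$ and $G - S_2$, and inherits from either one the property of being a disjoint union of cliques. Writing $c(\cdot)$ for the number of connected components, I aim to prove the chain
\[
c(G - S_2) \;\ge\; c(G') \;\ge\; c(G - S_1) - k,
\]
which immediately yields the lemma.

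The right inequality is a simple counting step. I view $G'$ as obtained from $G - S_1$ by deleting the at most $k$ vertices in $S_2 \setminus S_1$. Since every component of $G - S_1$ is a clique, each single deletion either shrinks a clique or erases it entirely---it never splits a component into two, and never creates a new one. A component is therefore lost only when all of its vertices lie in $S_2 \setminus S_1$, and each lost component consumes at least one such vertex. Hence at most $|S_2| \le k$ components disappear.

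For the left inequality, I would define a map $\psi$ sending each component $C$ of $G'$ to the unique component of $G - S_2$ containing $C$; this is well-defined because $C$ is connected and disjoint from $S_2$. The heart of the proof is the injectivity of $\psi$: if two distinct components $C_1 \neq C_2$ of $G'$ were mapped to the same clique-component $D$ of $G - S_2$, then picking $u \in C_1$ and $v \in C_2$ would give an edge $uv \in E(G)$ (as $D$ is a clique) with both endpoints outside $S_1 \cup S_2$, hence an edge of $G'$ joining $C_1$ and $C_2$---a contradiction. The only thing to keep straight is which hypothesis powers which step: the clique structure of $G - S_2$ drives the left inequality, while the clique structure of $G - S_1$ rules out component-splitting on the right. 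Nothing else is really delicate, and the bound is tight when $S_1 = \emptyset$ and $S_2$ consists of $k$ isolated vertices of $G$.
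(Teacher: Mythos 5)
Your proof is correct and follows essentially the same route as the paper's: the paper's one-line argument ("$S_2$ removes at most $k$ vertices from the clique-components of $G-S_1$, and two disjoint complete graphs cannot be merged into one complete graph by adding vertices") is precisely your two-step factorization through $G-(S_1\cup S_2)$, with the no-splitting step powered by the clique structure of $G-S_1$ and the no-merging step by that of $G-S_2$. You have simply written out in full what the paper leaves implicit.
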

\begin{proof}
Note that $S_2$ can only remove at most $k$ vertices from the components of $G-S_1$, and two disjoint complete graphs cannot be merged into one complete graph by adding some new vertices. Thus, the number of components of $G- S_2$ is at least the number of components of $G- S_1$ minus $k$.
\end{proof}

\section{Finding a vertex of large complete degree}\label{sec:completeneighbor}

In this section, we prove that if a graph is reduced under certain rules and its size is still large, then 
there should exist a vertex of large complete degree.
To do this, we first provide basic reduction rules.

\subsection{Basic reduction rules}
\begin{RULE}[Block component rule]\label{rule:blockcomponent} 
If $G$ has a component $H$ that is a block graph, 
then we remove $H$ from $G$.
\end{RULE}

\begin{RULE}[Cut vertex rule]\label{rule:cutvertex} 
Let $v$ be a vertex of $G$ such that $G-v$ contains a component $H$ where $G[V(H)\cup \{v\}]$ is a connected block graph. 
Then we remove $H$ from $G$.
\end{RULE}

Two vertices $v, w$ in a graph $G$ are called \emph{true twins} if $N_G(v)\setminus \{w\}=N_G(w)\setminus \{v\}$ and $vw\in E(G)$.
Note that two simplicial vertices in a block of a block graph are true twins.
\begin{RULE}[Twin rule]\label{rule:twinreduction}
Let $S$ be the set of vertices  that are pairwise true twins in $G$.
If $\abs{S}\ge k+2$, then we remove vertices except $k+1$ vertices.
\end{RULE}

It is not hard to observe that Rules~\ref{rule:blockcomponent}, \ref{rule:cutvertex}, and \ref{rule:twinreduction} are sound.
Note that we can test whether a given graph is a block graph in quadratic time using an algorithm to partition the graph into blocks~\cite{HopcroftT1973}, and testing whether each block is a complete graph.

\begin{RULE}[Reducing block-cut vertex paths]\label{rule:blockcutpath}
Let $t_1t_2t_3t_4$ be an induced path of $G$ and
for each $1\le i\le 3$, let $S_i\subseteq V(G)\setminus \{t_1, \ldots, t_4\}$ be a clique of $G$ such that
\begin{itemize}
\item for each $1\le i\le 3$ and $v\in S_i$, $N_G(v)\setminus S_i=\{t_i, t_{i+1}\}$, and
\item for each $2\le i\le 3$, $N_G(t_{i})=\{t_{i-1}, t_{i+1}\}\cup S_{i-1}\cup S_{i}$. 
\end{itemize}
Then we remove $S_2$ and contract $t_2t_3$. 
\end{RULE}

Clearly, we can apply Reduction Rule~\ref{rule:blockcutpath} in polynomial time. We prove the soundness of Reduction Rule~\ref{rule:blockcutpath}.

\begin{figure}[t]
\centering
\begin{tikzpicture}[->,>=stealth',shorten >=1pt,auto,node distance=1.5cm,
                    semithick]
  \tikzstyle{big_graph}=[fill=black,ellipse, draw=none,text=white, minimum size=40pt]
  \tikzstyle{normal}=[draw,circle,fill=white,minimum size=4pt, inner sep=0pt]
  \tikzstyle{clique}=[draw,ellipse,fill=gray,minimum size=20pt, inner sep=0pt]
  \tikzstyle{ghost}=[draw=none,ellipse,fill=none,minimum size=4pt, inner sep=0pt]

  \node[big_graph] 	(A)                    {Rest of $G$};
  \node[ghost]		(B) [below of=A] {};
  \node[ghost]		(C) [below of=B] {};
  \node[clique]        	(D) [below  of=C] {$S_2$};
  \node[normal]         (E) [left of=B] {$t_1$};
  \node[clique]    	(F) [left of=C] {$S_1$};
  \node[normal]		(G) [left of=D] {$t_2$};
  \node[normal]		(H) [right of=B] {$t_4$};
  \node[clique]		(I) [right  of=C] {$S_3$};
  \node[normal]		(J) [right of=D] {$t_3$};

  \path (A) 	edge[-, line width=1pt] node {} (H)
  			edge[-, line width=1pt, bend left] node {} (H)
			edge[-, line width=1pt, bend right] node {} (H)
  			edge[-, line width=1pt] node {} (E)	 
  			edge[-, line width=1pt, bend left] node {} (E)	 
  			edge[-, line width=1pt, bend right] node {} (E)	 
            (F) 	edge[-, line width=4pt] node {} (E)
            		edge[-, line width=4pt] node {} (G)
		(D) 	edge[-, line width=4pt] node {} (G)
			edge[-, line width=4pt] node {} (J)
		(I)	edge[-, line width=4pt] node {} (H)
			edge[-, line width=4pt] node {} (J) ;
\end{tikzpicture} 
\caption{Reduction Rule~\ref{rule:blockcutpath}} \label{blockcutpath}
\end{figure}
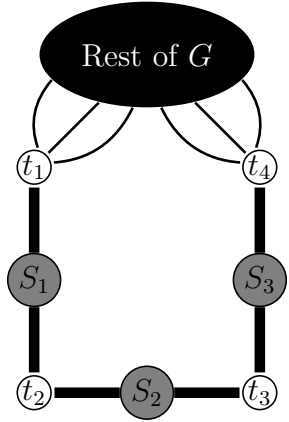

\begin{LEM}\label{lem:soundrule5}
Reduction Rule~\ref{rule:blockcutpath} is safe.
\end{LEM}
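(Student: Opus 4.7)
The plan is to establish a size-preserving correspondence between block deletion sets of $G$ and those of the reduced graph $G'$, via two normalizations followed by a short case analysis on obstructions. Let $t^\ast$ denote the vertex obtained from contracting $t_2t_3$.

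First I would normalize any deletion set $T$ for $G$ so that $T\cap(S_1\cup S_2\cup S_3)=\emptyset$ and $|T\cap\{t_2,t_3\}|\le 1$, and symmetrically for $T'\subseteq V(G')$. For the first, note that every $s\in S_1\cup S_2\cup S_3$ is simplicial in $G$, since $N_G(s)\subseteq S_i\cup\{t_i,t_{i+1}\}$ is a clique; a simplicial vertex cannot lie on any induced cycle of length $\ge 4$ nor on any diamond, since on each such obstruction every vertex has a pair of non-adjacent neighbors. Hence $T\setminus\{s\}$ is still a block deletion set: putting $s$ back only enlarges an existing clique block. For the second, if $\{t_2,t_3\}\subseteq T$, then in $G-(T\setminus\{t_3\})$ the vertex $t_3$ has its neighborhood $\{t_4\}\cup S_2\cup(S_3\setminus T)\subseteq U$ split into two cliques meeting only at $t_3$, so any induced obstruction through $t_3$ would be confined to $U$ (with $t_2$ deleted), which is already a block graph -- impossible. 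The same two normalizations apply in $G'$, using that $S_1\cup S_3$ are simplicial there.

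For the forward direction, given a normalized $T$, set
\[
T'\;=\;(T\setminus\{t_2,t_3\})\cup\bigl\{t^\ast:T\cap\{t_2,t_3\}\neq\emptyset\bigr\}\subseteq V(G'),
\]
so $|T'|\le|T|$. Any obstruction $O'$ in $G'-T'$ that avoids $t^\ast$ lives on $V(G')\setminus\{t^\ast\}=V(G)\setminus(\{t_2,t_3\}\cup S_2)$, on which $G$ and $G'$ induce the same subgraph; hence $O'$ is also an obstruction in $G-T$, contradicting that $T$ is a deletion set. If $O'$ uses $t^\ast$, the key structural fact is that $N_{G'}(t^\ast)=(S_1\cup\{t_1\})\cup(S_3\cup\{t_4\})$ is a union of two cliques meeting only at $t^\ast$ with no cross-edges. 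Chordlessness then forces the two cycle-neighbors of $t^\ast$ on $O'$ to be exactly $t_1$ and $t_4$: any cycle-neighbor in $S_1$ or $S_3$ would have to be followed by yet another neighbor of $t^\ast$, creating a chord. So $O'$ is an induced cycle $t_1t^\ast t_4\cdots t_1$ closed externally, which lifts to an induced cycle through $t_1t_2t_3t_4$ of length one greater in $G-T$, contradiction. Diamonds through $t^\ast$ are ruled out by a short case-check on the role of $t^\ast$ in the diamond, again using the two-cliques structure of $N_{G'}(t^\ast)$.

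The reverse direction is symmetric: given a normalized $T'$, set $T=T'$ if $t^\ast\notin T'$, and $T=(T'\setminus\{t^\ast\})\cup\{t_2\}$ otherwise. Any obstruction $O$ in $G-T$ that avoids $\{t_2,t_3\}\cup S_2$ corresponds to an obstruction in $G'-T'$; otherwise, simpliciality of $S_2$ forces $O$ to use $t_2$ or $t_3$, and the tight constraints $N_G(t_i)\subseteq U$ for $i\in\{2,3\}$ combined with chordlessness force $O$ to traverse $t_1t_2t_3t_4$ and close externally, which requires $\{t_2,t_3\}\cap T=\emptyset$ (i.e., $t^\ast\notin T'$) and produces, upon contraction, an induced cycle through $t^\ast$ in $G'-T'$. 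The main obstacle I expect is the chordlessness case analysis: one must enumerate non-adjacent pairs in the relevant neighborhoods and verify that every attempted extension of an induced cycle or diamond through $t^\ast$ (resp.\ through $t_2,t_3$) outside the "skeleton" cycle creates a chord; simpliciality of the $S_i$-vertices and the tight conditions on $N(t_2),N(t_3)$ are what keep this enumeration short.
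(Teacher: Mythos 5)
Your overall architecture is essentially an expanded version of the paper's own argument: the paper likewise reduces everything to the facts that no obstruction meets $S_2$ and that any obstruction meeting $\{t_2,t_3\}$ must contain both of them and be an induced cycle of length at least $5$ traversing $t_1t_2t_3t_4$, which survives deleting $S_2$ and contracting $t_2t_3$. Your explicit normalization of deletion sets and the two-way map between solutions of $(G,k)$ and $(G',k)$ is more detailed than the paper's four-sentence proof and is a legitimate way to make the equivalence airtight; the analysis of $N_{G'}(t^\ast)$ as two cliques with no cross-edges, forcing the cycle-neighbors of $t^\ast$ to be $t_1$ and $t_4$, matches what the paper leaves implicit.

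There is, however, a genuine flaw in a load-bearing step. You assert that a simplicial vertex cannot lie on any obstruction ``since on each such obstruction every vertex has a pair of non-adjacent neighbors.'' This is false for diamonds: in the diamond on $\{a,b,c,d\}$ with non-edge $ac$, the degree-two vertex $a$ has neighbors $b,d$ which \emph{are} adjacent, so $a$'s neighborhood inside the diamond is a clique --- indeed the diamond graph itself contains two simplicial vertices lying on it. The same false principle reappears in your second normalization (``any obstruction through $t_3$ would need two non-adjacent neighbors of $t_3$'') and in the reverse direction (``simpliciality of $S_2$ forces $O$ to use $t_2$ or $t_3$''). The conclusions you draw are still true, but they require separately excluding the case that $s\in S_i$ (respectively $t_2$ or $t_3$) is a degree-two vertex of an induced diamond, and this does not follow from simpliciality alone; it follows from the specific neighborhood constraints of the rule. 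For instance, for $s\in S_2$ the two adjacent neighbors of $s$ in such a diamond would have to be $t_2$ and $t_3$ (any neighbor in $S_2$ forces the fourth vertex back into $N[s]$), and the fourth vertex would be a common neighbor of $t_2$ and $t_3$ outside $S_2\cup\{t_2,t_3\}$, which is empty by the hypothesis $N_G(t_i)=\{t_{i-1},t_{i+1}\}\cup S_{i-1}\cup S_i$; analogous checks are needed for $S_1$, $S_3$, and for $t_3$ in the normalization step. With those case analyses inserted, the rest of your argument --- the forced passage of obstructions through $t_1t_2t_3t_4$, and the lifting and contraction of long induced cycles --- goes through.
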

\begin{proof}
Let $t_1t_2t_3t_4$ be an induced path of length 3 in $G$ and
for each $1\le i\le 3$, $S_i$ be a clique which altogether satisfy the condition of Reduction Rule~\ref{rule:blockcutpath}.

It is easy to check that no vertex from $S_2$ is contained in an induced cycle of length at least 4, or an induced diamond in $G$. Since all obstructions are 2-connected, any obstruction in $G$ intersecting $S_2\cup \{t_2,t_3\}$ contains exactly $t_2,t_3$ and none of $S_2$. This means that such an obstruction is a cycle of length at least 5, which remains an obstruction after deleting $S_2$ and contracting the edge $t_2t_3$. Thus, $(G, k)$ is a \YES-instance if and only if $(G', k)$ is a \YES-instance.
\end{proof}

The following rule will be applied using Proposition~\ref{prop:generalcompletedegree}. 

\begin{RULE}[$(k+1)$-distinct obstructions rule]\label{rule:disjoint}
Let $v\in V(G)$ and let $G':=G-v-E(G[N_G(v)])$ such that there are $2k+1$ vertex-disjoint $N_G(v)$-paths in $G'$. 
If $G$ contains $k+1$ vertex-disjoint obstructions, then say that it is a \NO-instance.
Otherwise, we remove $v$ from $G$, and decrease $k$ by one. (By Proposition~\ref{prop:generalcompletedegree}, one of them exists.)
\end{RULE}

\subsection{A vertex of large complete degree}

An instance $(G, k)$ is called a \emph{reduced instance} if it is reduced under Rules~\ref{rule:blockcomponent}, \ref{rule:cutvertex}, \ref{rule:twinreduction}, \ref{rule:blockcutpath}, and \ref{rule:disjoint} introduced in the previous subsection. In this subsection, we prove that there exists a vertex of large complete degree whenever a reduced instance is sufficiently large, which is stated as Theorem~\ref{thm:largecompletedegree}.

For positive integers $k,\ell$, we define that 
\begin{itemize}
\item $g_1(k,\ell):=6k^2(\ell+14k)^2+2k(\ell+14k)$,
\item $g_2(k,\ell):=(k+1)^2+7k^2+\frac{1}{2}k(\ell+14k)$.
\end{itemize}

\begin{THM}\label{thm:largecompletedegree}
Let $(G, k)$ be a reduced instance of \BLOC\ that is a \YES-instance.
If $G$ has at least $k+g_1(k,\ell)g_2(k,\ell)$ vertices then $G$ has a vertex of complete degree  at least $\ell+1$.
\end{THM}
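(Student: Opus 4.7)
Fix a hypothetical solution $S^{\star} \subseteq V(G)$ with $\abs{S^{\star}} \le k$, so that $B^{\star} := G - S^{\star}$ is a block graph on at least $g_1(k,\ell)\,g_2(k,\ell)$ vertices. The plan is to locate a cut vertex $v$ of $B^{\star}$ lying in at least $g_2(k,\ell)$ blocks of $B^{\star}$, and to verify that such a $v$ has complete degree at least $\ell+1$.

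Suppose we have $v \in V(B^{\star})$ lying in $m$ blocks of $B^{\star}$. The canonical witness $S_v^{0} := S^{\star}$ is valid for the complete-degree definition: $v \notin S^{\star}$, $\abs{S^{\star}} \le k \le 7k$, and $G - S^{\star} = B^{\star}$ is a block graph and so has no obstruction at all. Moreover $G[N_G(v) \setminus S^{\star}] = N_{B^{\star}}(v)$ is the disjoint union of $m$ cliques, one per block of $B^{\star}$ through $v$. For any other valid $S_v$ with $\abs{S_v} \le 7k$, the graph $G[N_G(v) \setminus S_v]$ is itself a disjoint union of cliques (else an induced $P_3$ inside $N_G(v) \setminus S_v$ would yield a diamond through $v$ in $G - S_v$). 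Applying Lemma~\ref{lem:countingcomponents} inside $G[N_G(v)]$ with $S_1 := S^{\star} \cap N_G(v)$ and $S_2 := S_v \cap N_G(v)$ then shows that $G[N_G(v) \setminus S_v]$ has at least $m - 7k$ cliques. A short argument verifies that representatives from distinct such cliques lie in distinct components of $G - (S_v \cup \{v\})$: a shortest path between them, combined with $v$, would either be an induced cycle of length at least $4$ through $v$ or, after iterating along $N_G(v)$-chords, a diamond through $v$, contradicting the validity of $S_v$. Therefore the complete degree of $v$ is at least $m - 7k$, and $m \ge g_2(k,\ell) \ge \ell + 7k + 1$ is enough.

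To find such a cut vertex, I would analyze the block tree $\mathcal{T}_{B^{\star}}$ using the reductions. Rule~\ref{rule:blockcomponent} forces every component of $B^{\star}$ to touch $S^{\star}$; Rule~\ref{rule:cutvertex} forbids any vertex of $G$ from having a pendant ``block-graph arm'' detachable without meeting $S^{\star}$; Rule~\ref{rule:twinreduction}, together with a classification of simplicial vertices of $B^{\star}$ by their $S^{\star}$-neighborhood type, caps the size of individual blocks; and Rule~\ref{rule:blockcutpath} rules out long chains of thin pendant blocks that might otherwise let many types coexist. These constraints partition $V(B^{\star})$ into ``configuration classes'' attached to $S^{\star}$ in a controlled way; an outer pigeonhole over at most $g_1(k,\ell)$ such classes, followed by an inner pigeonhole with quota $g_2(k,\ell)$ per class, applied to $\abs{V(B^{\star})} \ge g_1(k,\ell)\,g_2(k,\ell)$, produces a single cut vertex of $B^{\star}$ that accounts for at least $g_2(k,\ell)$ distinct blocks.

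The main obstacle I anticipate is the structural/combinatorial argument in the preceding paragraph, namely bounding the size of each block (and the number of simplicial vertices of each $S^{\star}$-neighborhood type in it) by a polynomial in $k$ and $\ell$ rather than by $2^{\abs{S^{\star}}}$. Rule~\ref{rule:blockcutpath} is the critical tool here, collapsing the chains that would otherwise allow simplicial $S^{\star}$-neighborhood types to proliferate. Correctly tracking the slack constants $7k$ from Proposition~\ref{prop:generalcompletedegree}, $k+1$ from the twin rule, and the $\ell + 14k$ block-degree budget through the two-level pigeonhole is precisely what produces the exact formulas for $g_1$ and $g_2$.
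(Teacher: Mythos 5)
Your opening reduction (a cut vertex $v$ of $B^{\star}=G-S^{\star}$ lying in $m$ blocks has complete degree at least $m-7k$) is essentially sound, but the overall plan rests on a claim that is false: a reduced \YES-instance need not contain \emph{any} cut vertex of $B^{\star}$ lying in many blocks. The witness of large complete degree must in general be sought inside the deletion set $S^{\star}$, not inside $B^{\star}$. Concretely, take $k=1$ and let $G$ be the fan $K_1 + P_N$, i.e., a single vertex $v$ joined to every vertex of a long induced path $p_1\cdots p_N$. Then $\{v\}$ is a solution, $B^{\star}$ is a path in which every cut vertex lies in exactly two blocks, and none of Rules~\ref{rule:blockcomponent}--\ref{rule:disjoint} applies (Rule~\ref{rule:cutvertex} and Rule~\ref{rule:blockcutpath} are blocked because every $p_i$ sees $v$, and Rule~\ref{rule:disjoint} is blocked because $(G-v)-E(G[N_G(v)])$ is edgeless). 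So no pigeonhole argument on $B^{\star}$, however the ``configuration classes'' are set up, can produce the cut vertex you need; the reduction rules constrain how $S^{\star}$ attaches to $B^{\star}$, not the branching structure of $B^{\star}$ itself. This is why the paper argues in the opposite direction and by contradiction: assuming every $v\in S^{\star}$ has complete degree at most $\ell$, it bounds the size of each block of $B^{\star}$ (Lemma~\ref{lem:boundunmarked}, which is where Rule~\ref{rule:twinreduction} and the twin/diamond argument enter), concludes that $B^{\star}$ has at least $g_1(k,\ell)$ blocks, shows that only $O(k(\ell+14k))$ of these blocks are ``touched'' by $S^{\star}$ (Lemma~\ref{lem:countingred}), and then extracts three consecutive untouched blocks forming exactly the pattern of Rule~\ref{rule:blockcutpath} --- contradicting reducedness. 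The conclusion is that some $v\in S^{\star}$ has complete degree at least $\ell+1$, and that vertex is never exhibited constructively.

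Beyond this structural misdirection, your second and third paragraphs are not yet an argument: the ``configuration classes,'' the outer pigeonhole over $g_1(k,\ell)$ classes, and the inner quota of $g_2(k,\ell)$ are never defined or verified, and you yourself flag the block-size bound as an unresolved obstacle. (As a smaller point, the inequality $g_2(k,\ell)\ge \ell+7k+1$ that your first paragraph relies on fails for $k=1$ and large $\ell$, e.g., $\ell=29k$ as used in the kernelization, since $g_2(1,\ell)=11+\tfrac{1}{2}(\ell+14)$.) To repair the proof you would need to abandon the search for a high-degree cut vertex of $B^{\star}$ and instead run the contradiction argument on the vertices of $S^{\star}$ as the paper does.
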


Let $(G, k)$ be a reduced instance of \BLOC\ and let $S\subseteq V(G)$ of size at most $k$ such that $G- S$ is a block graph. 
We let $G':=G- S$ and for each $v\in S$, we define that
\begin{itemize}
\item $G_v:=G[V(G')\cup \{v\}]$, 
\item $S_v'$ is a vertex set of size at most $7k$ in $G-v$ that is obtained by Proposition~\ref{prop:generalcompletedegree},
\item $S_v:=S_v'\cap V(G')$.
\end{itemize}
Let $T:=\bigcup_{v\in S} S_v$. Note that $\abs{T}\le 7k^2$ and for each $v\in S$, there are no block graph obstructions containing $v$ in $G_v- T$.

We first give a bound on the size of each block of $G'$ and the number of blocks in $G'$ sharing a cut vertex with it, assuming that there is no vertex in $S$ of large complete degree in $G$. 
Each block of $G'$ consists of the set of simplicial vertices and the set of cut vertices in $G'$. 

\begin{LEM}\label{lem:ss2}
Let $F$ be a graph whose vertex set is $X\cup \{v_1, \ldots, v_{t}\}$ such that
$t\ge 2$ and $X$ is a clique of $F$ and every two vertices of $X$ have different neighbors on $\{v_1, \ldots, v_{t}\}$.
If $\abs{X}\ge t+2$, then $F$ contains a diamond having exactly one vertex of $\{v_1, \ldots, v_{t}\}$.
\end{LEM}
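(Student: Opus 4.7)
The plan is to reduce the problem to a simple neighbourhood-counting argument. Since $X$ is a clique, any three vertices $x_1,x_2,x_3\in X$ together with some $v_i$ induce a diamond exactly when $v_i$ is adjacent to precisely two of $x_1,x_2,x_3$ (adjacency to all three yields $K_4$, and to at most one yields either a triangle plus an isolated vertex or a triangle with a pendant). Hence it suffices to produce some $v_i$ with $2\le |N_F(v_i)\cap X|\le |X|-1$; then pick two neighbours and one non-neighbour inside $X$ (which exist because $|X|\ge t+2\ge 4$).

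Next, I will argue by contradiction: assume no such $v_i$ exists, so every $v_i$ satisfies $|N_F(v_i)\cap X|\in\{0,1,|X|\}$. Partition the $v_i$'s into $A$ (adjacent to all of $X$), $B$ (adjacent to exactly one vertex of $X$) and $C$ (adjacent to no vertex of $X$). For each $x\in X$, set $B_x:=\{v\in B:v\sim x\}$, so that $N_F(x)\cap\{v_1,\dots,v_t\}=A\cup B_x$. The hypothesis that any two vertices of $X$ have distinct neighbourhoods on $\{v_1,\dots,v_t\}$ translates to pairwise distinct $B_x$.

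The key observation is that the $B_x$, as $x$ ranges over $X$, are pairwise \emph{disjoint} (each $v\in B$ lies in exactly one $B_x$). Pairwise disjoint sets can be pairwise distinct only if at most one of them is empty, so at least $|X|-1$ of the $B_x$ are nonempty, giving $|X|-1\le |B|\le t$, i.e. $|X|\le t+1$. This contradicts $|X|\ge t+2$, finishing the proof.

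I do not expect a real obstacle here; the only care needed is to correctly recognise when four vertices induce a diamond rather than a $K_4$ or a triangle-plus-leaf, and to exploit the disjointness of the $B_x$ together with the distinctness hypothesis in a single counting step.
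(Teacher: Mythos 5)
Your proof is correct and follows essentially the same route as the paper's: both reduce to exhibiting a $v_i$ with at least two neighbours and at least one non-neighbour in the clique $X$, and then take two neighbours plus one non-neighbour to form the diamond. The only difference is how that $v_i$ is produced — the paper passes to a minimal subset of $\{v_1,\dots,v_t\}$ preserving the distinct-neighbourhood property and invokes minimality to rule out a $v_i$ adjacent to all of $X$, whereas you handle this directly with the $A/B/C$ partition and the disjointness-plus-distinctness count on the sets $B_x$; your version is, if anything, slightly more explicit.
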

\begin{proof}
Without loss of generality, we can assume that $\{v_1, \ldots, v_{t}\}$ is a minimal set with the aforementioned property. Notice that there exists a vertex $v_i$ which has at least two neighbors in $X$. By minimality assumption, $v_i$ is not adjacent with all vertices in $X$. Choose distinct vertices $x,y,z\in X$ such that $x$,$y$ are neighbors of $v_i$ and $z$ is not. Observe that $F[\{v_i,x,y,z\}]$ is isomorphic to the diamond containing exactly one vertex of $\{v_1, \ldots, v_{t}\}$
\end{proof}

\begin{LEM}\label{lem:boundunmarked} 
Let $B$ be a block of $G'$, and let $B_1$ and $B_2$ be the sets of all simplicial vertices and all cut vertices of $G'$ contained in $B$, respectively.
Let $H_1, H_2, \ldots, H_{t}$ be the components of $G'- V(B)$ that has a neighbor in $B$. The followings hold. 
\begin{enumerate}
\item $\abs{B_1}\le (k+1)^2+7k^2$.
\item 
If 
for every $v\in S$, $v$ has complete degree at most $\ell$ in $G$, then $\abs{B_2}\le t\le k(\ell+14k)$.
\end{enumerate}
\end{LEM}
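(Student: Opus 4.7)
I will treat the two parts separately; the main difficulty lies in part (2).

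For part (1), my plan is to partition $B_1 \setminus T$ according to each vertex's trace on $S$: two vertices $u, u'$ go in the same class exactly when $N_G(u) \cap S = N_G(u') \cap S$. Since $u \in B_1$ is simplicial in $G'$ we have $N_{G'}(u) = B \setminus \{u\}$, and because $B$ is a clique of $G$, any two members of the same class are true twins in $G$; the Twin rule thus bounds each class by $k+1$. It remains to bound the number of classes by $k+1$: if there were $\ge k+2$ classes, pick one representative from each to obtain $X' \subseteq B_1 \setminus T$ of size $\ge k+2$ whose members have pairwise distinct $S$-traces. Since $X'$ sits inside the clique $B$ and $|S| \le k$, Lemma~\ref{lem:ss2} (applied with $X = X'$ and $\{v_1, \ldots, v_t\} = S$) hands me a diamond with exactly one vertex $v \in S$ and three vertices in $X'$. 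As $v \notin T$ and $X' \cap T = \emptyset$, this diamond lies in $G_v - T$ and contains $v$, contradicting the fact that $G_v - T$ has no obstruction through $v$. Adding the trivial bound $|B_1 \cap T| \le 7k^2$ gives $|B_1| \le (k+1)^2 + 7k^2$.

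For part (2), the inequality $|B_2| \le t$ follows quickly from the block-tree structure of the block graph $G'$: distinct cut vertices of $B$ correspond to disjoint subtrees hanging off $B$, each yielding one component of $G' - V(B)$ adjacent to $B$. For the bound $t \le k(\ell + 14k)$, I first observe that every $H_i$ must have a neighbor in $S$ in the graph $G$; otherwise $V(H_i) \cup \{u_i\}$ is joined to the rest of $G$ only through $u_i$, and $G[V(H_i) \cup \{u_i\}]$ is a connected block graph (being a subgraph of $G'$) appearing as a component of $G - u_i$, so the Cut-vertex rule would have removed $H_i$. By pigeonhole some $v^* \in S$ is adjacent in $G$ to at least $t/k$ of the components; collect these indices as $I_{v^*}$. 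Using that the complete degree of $v^*$ is at most $\ell$, I pick $S^* \subseteq V(G) \setminus \{v^*\}$ with $|S^*| \le 7k$, with $G - S^*$ free of obstructions through $v^*$, and with $G - (S^* \cup \{v^*\})$ having at most $\ell$ components. Since the $H_i$'s are pairwise disjoint, at most $|S^*| \le 7k$ indices $i \in I_{v^*}$ can satisfy $V(H_i) \cap N_G(v^*) \subseteq S^*$; for each remaining $i$ I choose $w_i \in V(H_i) \cap N_G(v^*) \setminus S^*$. If the $w_i$'s sit in pairwise distinct components of $G - (S^* \cup \{v^*\})$ then $|I_{v^*}| - 7k \le \ell$, so $|I_{v^*}| \le \ell + 7k$, and combined with $|I_{v^*}| \ge t/k$ this yields $t \le k(\ell + 7k) \le k(\ell + 14k)$.

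The main obstacle is to prove that the chosen $w_i$'s really do lie in distinct components of $G - (S^* \cup \{v^*\})$. For $i \ne j$ in $I_{v^*}$, the vertices $w_i \in V(H_i)$ and $w_j \in V(H_j)$ lie in distinct components of $G' - V(B)$, so they are non-adjacent in $G'$, and as neither is in $S$ they are non-adjacent in $G$ as well. Because $G - S^*$ has no obstruction through $v^*$, the graph $G[N_G(v^*) \setminus S^*]$ is a disjoint union of complete graphs, and $w_i, w_j$ lie in different cliques of this decomposition. Assuming for contradiction they are in the same component, pick any $w_i, w_j$-path $p_0 \cdots p_s$ in $G - (S^* \cup \{v^*\})$. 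Along this path the vertices lying in $N_G(v^*)$ form maximal runs, each contained in a single clique (any edge between two neighbors of $v^*$ in $G - S^*$ forces its endpoints into the same clique). Since the endpoints lie in different cliques, there must be two consecutive runs in different cliques; this yields indices $a < b$ with $p_a, p_b \in N_G(v^*) \setminus S^*$ in distinct cliques, $p_{a+1}, \ldots, p_{b-1} \notin N_G(v^*)$, and $b \ge a + 2$ (else $p_a p_b$ would be an edge, forcing them into the same clique). Then $p_a \cdots p_b$ is an $N_G(v^*)$-path of length $\ge 2$ inside $G - v^* - S^*$, and the same case analysis as in the proof of Proposition~\ref{prop:generalcompletedegree} applied to a shortest $p_a, p_b$-path in $G[\{p_a, \ldots, p_b\}]$ produces a block-graph obstruction through $v^*$ sitting inside $G - S^*$, the required contradiction.
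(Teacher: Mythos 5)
Your proof is correct. Part (1) is essentially the paper's own argument: partition $B_1\setminus T$ by traces on $S$, bound each class by $k+1$ via the Twin rule, and bound the number of classes by $k+1$ via Lemma~\ref{lem:ss2} (the paper separates out the trivial case $\abs{S}\le 1$, which Lemma~\ref{lem:ss2} does not cover since it needs $t\ge 2$; your argument handles it implicitly, because $k+2$ pairwise distinct traces force $\abs{S}\ge 2$). Part (2) follows the same counting skeleton as the paper --- every $H_i$ must see $S$ by the Cut vertex rule, and each $v\in S$ can see at most $\ell+14k$ of the $H_i$'s --- but you justify the per-vertex bound by a different route. The paper works with the fixed sets $S_v'$ produced by Proposition~\ref{prop:generalcompletedegree} and invokes Lemma~\ref{lem:countingcomponents} to pass from an optimal witness to $S_v'$, paying an extra $7k$ components; you instead work directly with an optimal witness $S^*$ for the complete degree and prove from scratch that distinct cliques of $G[N_G(v^*)\setminus S^*]$ lie in distinct components of $G-(S^*\cup\{v^*\})$, by extracting an $N_G(v^*)$-path of length at least $2$ from any hypothetical connecting path and re-running the obstruction construction from the proof of Proposition~\ref{prop:generalcompletedegree}. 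Your route avoids Lemma~\ref{lem:countingcomponents} entirely, gives the slightly sharper per-vertex bound $\ell+7k$, and makes explicit a point the paper glosses over, namely that the complete degree (defined via components of the whole graph $G-(S_v\cup\{v\})$) really does control the number of cliques meeting the $H_i$'s; the paper's version is shorter because $S_v'$, $T$ and Lemma~\ref{lem:countingcomponents} are already set up for reuse elsewhere in the section.
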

\begin{proof}
(1) We first give a bound on the number of simplicial vertices of a block for $G'-T$.
Note that $B\setminus T$ is a block of $G'-T$.
Let $B_1'$ be $B_1\setminus T$. Clearly, $\abs{B_1}\le \abs{B_1'}+7k^2$.

Since vertices in $B_1'$ are pairwise true twins in $G'$, 
if two vertices in $B_1'$ have the same neighbors on $S$, then they are true twins in $G$.
We partition $B_1'$ into equivalent classes where two vertices are equivalent if they have the same neighbors on $S$.
From Reduction Rule~\ref{rule:twinreduction}, each equivalent class  has at most $k+1$ vertices.

If $\abs{S}\le 1$, then there are at most $2$ equivalent classes in $B_1'$.
If $\abs{S}\ge 2$ and the number of equivalent classes in $B_1'$ is at least $k+2$, 
then since $\abs{S}\le k$,
$G$ contains a diamond containing exactly one vertex $v$ of $S$ by Lemma~\ref{lem:ss2}.
This contradicts to the fact that $G_v-T$ has no obstruction containing $v$.
Thus, the number of equivalent classes in $B_1'$ is at most $k+1$ and $\abs{B_1}\le \abs{B_1'}+7k^2\le (k+1)^2+7k^2$.

\vskip 0.2cm
(2) Suppose that for every $v\in S$, $v$ has complete degree at most $\ell$ in $G$.
It means that there is a way to remove $7k$ vertices from the neighborhood of $v$ in $G$
so that the number of the remaining components is at most $\ell$.
Since removing the set $S_v'$ also makes the neighborhood of $v$ into a disjoint union of complete graphs, by Lemma~\ref{lem:countingcomponents}, $G[N_{G}(v)\setminus S_v']$ has at most $\ell+7k$ components.
In particular, $G'[N_{G_v}(v)\setminus S_v]=G[N_{G}(v)\setminus S_v']-S$ also has at most $\ell+7k$ components that are complete graphs 
 as the number of components cannot increase when removing vertices.

From Reduction Rule~\ref{rule:cutvertex}, we may assume that each $H_i$ contains at least one neighbor of a vertex in $S$.
On the other hand, each $v$ in $S$ has at most $\ell+7k$ complete neighbors except $7k$ neighbors in $S_v$ and one complete neighbor cannot belong to two components of $H_1, \ldots, H_{t}$.
Thus, if $t\ge k(\ell+14k)+1$,
then there exists $H_j$ for some $1\le j\le t$ such that there are no edges between $S$ and $V(H_j)$, which is contradiction.
Since each component of $H_1, \ldots, H_t$ has at most one neighbor in $B_2$, we conclude that $\abs{B_2}\le t\le k(\ell+14k)$.
\end{proof}

\noindent {\bf Contracted Block Tree.} We introduce a notion called the {\em contracted block tree} of $G$. A contracted block tree ${\cal T}_G$ of a connected graph $G$ is a rooted tree obtained from a block tree ${\cal T}^0_G$ of $G$ by (i) choosing a block vertex of ${\cal T}^0_G$ as a root, and (ii) for each cut vertex $c$ of ${\cal T}^0_G$, identifying it with its unique parent.

Let ${\cal T}_{G'}$ be the union of contracted block trees of connected components of $G'$. 
We color the vertices of ${\cal T}_{G'}$ in three phases: in the first phase, for every vertex $v\in S$ and for every $w\in N_{V(G)\setminus S}(v)$, we choose the (unique) block $B\in V({\cal T}_{G'})$ which contains $w$ and is closest to the root, and color $B$ by red. Let $R_1$ be the vertices colored red so far. In the second phase, we again recursively color the least common ancestor of any pair of red vertices by red. Let $R$ be the set of red vertices ${\cal T}_{G'}$. 
All other vertices of ${\cal T}_{G'}$ are colored blue.
 
\begin{LEM}\label{lem:countingred}
Suppose that the complete degree of $v$ is at most $\ell$ for every $v\in S$. Then we have $\abs{R}\leq 2k(\ell+14k)$.
\end{LEM}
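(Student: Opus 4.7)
The plan is to bound $\abs{R_1}$ using the complete-degree hypothesis and then invoke the standard bound on LCA-closure in a rooted forest.

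For each $v \in S$, I would bound the contribution
\[
T_v := \{B(w) : w \in N_{V(G)\setminus S}(v)\}
\]
of $v$ to $R_1$ by $\ell + 14k$, where $B(w)$ denotes the unique topmost block of ${\cal T}_{G'}$ containing $w$. Using Proposition~\ref{prop:generalcompletedegree}, fix $S_v \subseteq V(G')$ with $\abs{S_v} \le 7k$ such that $G' - S_v$ has no obstruction through $v$. Exactly as in the proof of Lemma~\ref{lem:boundunmarked}(2), the assumption of complete degree $\le \ell$ together with Lemma~\ref{lem:countingcomponents} implies that $G'[N_{V(G)\setminus S}(v) \setminus S_v]$ is a disjoint union of at most $\ell + 7k$ cliques. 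Combined with $\abs{N_{V(G)\setminus S}(v) \cap S_v} \le 7k$, this partitions $N_{V(G)\setminus S}(v)$ into at most $\ell + 14k$ pieces, each either a singleton from $S_v$ or a clique from $N_{V(G)\setminus S}(v) \setminus S_v$. Since every $G'$-clique of size at least $2$ lies inside a single block of $G'$, each piece is naturally associated with a single block of ${\cal T}_{G'}$, yielding $\abs{T_v} \le \ell + 14k$ and hence $\abs{R_1} \le k(\ell + 14k)$.

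Second, by construction $R$ is the closure of $R_1$ under pairwise least common ancestor in the rooted forest ${\cal T}_{G'}$. I would invoke the standard bound: the LCA-closure of any $m$-element set in a rooted forest has size at most $2m - 1$. The reason is that every node of the closure that is not in the original set is an LCA, hence has at least two children in the closure, and a tree-counting argument then caps the number of such extra nodes by $m-1$. Applying this with $m = \abs{R_1}$ gives $\abs{R} \le 2\abs{R_1} - 1 < 2k(\ell + 14k)$, which is the required inequality.

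The most delicate step is the claim that each piece contributes only one block to $T_v$. For a singleton this is immediate, but for a clique $C \subseteq B_C$ a subtle case arises: if $w \in C$ is a cut vertex of $G'$ whose block-tree parent lies outside $B_C$, then $B(w)$ is the parent of $B_C$ in ${\cal T}_{G'}$ rather than $B_C$ itself. The cleanest way I see to reconcile this is to assign $B_C$ as the charge of the clique and observe that any such extra parent, being an ancestor of $B_C$, is either already present in the LCA-closure of the remaining charges or is absorbed into the $2m-1$ bound without blowing up the final estimate. This bookkeeping about how cut-vertex contributions interact with the parent-identification in the contracted block tree is the point that requires the most care.
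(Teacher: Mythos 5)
Your proof is essentially identical to the paper's: the same per-vertex bound of $\ell+14k$ on first-phase red blocks (partitioning $N_G(v)\setminus S$ into the at most $7k$ vertices of $S_v$ and the at most $\ell+7k$ cliques of $G[N_{G_v}(v)\setminus S_v]$ guaranteed by Lemma~\ref{lem:countingcomponents}, each clique lying in a single block of $G'$), followed by the standard $\abs{R}\le 2\abs{R_1}-1$ bound for LCA-closure in a rooted forest. The one point where you go beyond the paper --- that a clique containing the cut vertex joining its block $B_C$ to the parent block causes that parent, and not only $B_C$, to be reddened --- is a genuine subtlety, but the paper itself elides it by flatly asserting that each clique ``renders at most one block vertex red''; your proposed reconciliation is not a complete argument either, though since each block has at most one such upward cut vertex the worst case is a per-vertex count of $2\ell+21k$ in place of $\ell+14k$, which perturbs only the constants in $g_1$ and nothing in the overall kernel bound.
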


\begin{proof}
It is easy to see that $\abs{R}\leq 2\abs{R_1}-1$, so we prove that $\abs{R_1}\leq k(\ell+14k)$. For each $v\in S$, the neighborhood $N_G(v)\setminus S$ can be partitioned into two sets: those contained in $S_v$ and $N_{G_v}(v)\setminus S_v$. Recall that $G[N_{G_v}(v)\setminus S_v]$ is a disjoint union of complete graphs, and there are at most $\ell+7k$ of them since the complete degree of $v$ is at most $\ell$ and due to Lemma~\ref{lem:countingcomponents}. Each complete graph in $G[N_{G_v}(v)\setminus S_v]$ is entirely contained in a block of $G'$, and thus renders at most one block vertex of ${\cal T}_{G'}$ red. With $\abs{S_v}\leq 7k$, it follows that for each $v\in S$, at most $\ell+14k$ block vertices are colored red in the first phase. Hence, we have $\abs{R_1}\leq k(\ell+14k)$.
\end{proof}

\begin{LEM}\label{lem:bluecomponents}
Let $T$ be a tree with at least $2$ vertices and degree at most $d$, and let $M$ be a set of vertices in $T$. Then there are at most $d\cdot \abs{M}$ connected components in $T-M$. 
\end{LEM}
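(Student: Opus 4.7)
The plan is to prove the bound by a direct edge-counting argument, counting edges of $T$ that straddle $M$ and its complement. The case $M=\emptyset$ is vacuous for any meaningful application (the inequality does fail trivially there when $T$ is nonempty, but this lemma will only be invoked on a nonempty $M$); accordingly I would state the proof under the standing assumption that $M\neq \emptyset$.

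First I would observe that because $T$ is connected and $M$ is nonempty, every connected component $C$ of $T-M$ must contain at least one vertex $v_C$ having a neighbor in $M$. Indeed, otherwise $C$ would be a union of connected components of $T$ itself, which is impossible since $T$ is connected and contains a vertex of $M$ not in $C$. Hence each component of $T-M$ is incident (via some edge of $T$) to at least one vertex of $M$.

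Next I would count the set $F$ of edges of $T$ with exactly one endpoint in $M$. Every such edge is incident to a vertex of $M$, and each vertex of $M$ has at most $d$ neighbors in $T$, so $\abs{F}\leq d\cdot \abs{M}$. On the other hand, the edges of $F$ incident to distinct components of $T-M$ are disjoint, and by the previous paragraph each component of $T-M$ accounts for at least one edge of $F$. Therefore the number of components of $T-M$ is at most $\abs{F}\leq d\cdot \abs{M}$.

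There is no real obstacle in this lemma; the only point requiring a brief justification is that every component of $T-M$ touches $M$ through at least one edge, and this follows from the connectedness of $T$. The bound $d\cdot \abs{M}$ is almost tight: if $M=\{v\}$ where $v$ has degree exactly $d$, then $T-M$ has exactly $d$ components, so no further slack is lost in the argument.
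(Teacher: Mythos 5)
Your proof is correct, but it takes a genuinely different route from the paper's. The paper argues by induction on $\abs{M}$: root the tree, pick a vertex $v\in M$ farthest from the root, apply the induction hypothesis to $T-V(T_v)$ with $M\setminus\{v\}$, and observe that reattaching the subtree $T_v$ contributes at most $d-1$ additional components, giving $d(\abs{M}-1)+(d-1)\le d\abs{M}$. You instead give a direct double-counting argument: each component of $T-M$ is charged to a distinct edge of $T$ having exactly one endpoint in $M$, and there are at most $d\abs{M}$ such edges. Both arguments are sound (your charging is injective because an edge with one endpoint outside $M$ meets exactly one component, and your claim that every component sends an edge into $M$ follows from connectedness of $T$ exactly as you say). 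Your version is arguably cleaner and strictly more general --- it never uses acyclicity, so it proves the bound for any connected graph of maximum degree $d$, whereas the paper's induction leans on the tree structure to peel off a deepest subtree. You are also right to flag the degenerate case $M=\emptyset$, where the stated bound fails for trivial reasons; the paper's proof silently starts its induction at $\abs{M}=1$ and the lemma is only ever invoked with $M$ nonempty (every component of the contracted block tree contains a red vertex), so your standing assumption is harmless.
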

\begin{proof}
We use induction on $\abs{M}$. If $\abs{M}=1$, then it is clear and we assume that $\abs{M}\ge 2$.
Let $r\in V(T)$ be the root of $T$ and orient all edges of $T$ toward $r$. Choose a vertex $v\in M$ farthest from the root and let $T_v$ be the subtree rooted at $v$. By induction hypothesis, the number of connected components in $T-V(T_v)-(M\setminus \{v\})$ is at most $d\cdot (\abs{M}-1)$. Therefore, the number of connected components in $T-M$ is at most $d\cdot (\abs{M}-1) + (d-1) \leq d \cdot \abs{M}$ as claimed.
\end{proof}

The next lemma follows from Lemma~\ref{lem:countingred} and~\ref{lem:bluecomponents}.
\begin{LEM}\label{lem:longbluepath} 
If $G'$ contains at least $g_1(k,\ell)$ blocks,  
then  ${\cal T}_{G'}$ has a blue component on at least $3$ vertices. 
\end{LEM}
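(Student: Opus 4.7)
The plan is to combine three ingredients: a uniform degree bound on the vertices of ${\cal T}_{G'}$ (from Lemma~\ref{lem:boundunmarked}(2)), the upper bound $\abs{R}\le 2k(\ell+14k)$ (Lemma~\ref{lem:countingred}), and the forest partitioning bound (Lemma~\ref{lem:bluecomponents}). A pigeonhole comparison between the total number of blue vertices and the maximum possible number of blue components should then finish the proof.

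First, I would verify that every vertex of ${\cal T}_{G'}$ has degree at most $d:=k(\ell+14k)$. The degree of a block $B$ in the contracted block tree equals the number of blocks adjacent to $B$ in ${\cal T}_{G'}$, and by unwinding the contraction, these neighbors are in bijection with the connected components of $G'-V(B)$ adjacent to $B$---exactly the quantity $t$ bounded by $k(\ell+14k)$ in Lemma~\ref{lem:boundunmarked}(2). Since ${\cal T}_{G'}$ may be a forest, I would also observe, using Reduction Rule~\ref{rule:blockcomponent}, that every tree component of ${\cal T}_{G'}$ contains at least one red vertex: a red-free tree component of ${\cal T}_{G'}$ would correspond to a connected component of $G'$ with no neighbor in $S$, i.e., a block-graph component of $G$, which Rule~\ref{rule:blockcomponent} would have removed.

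Finally, suppose for contradiction that every blue component of ${\cal T}_{G'}$ has at most $2$ vertices. Applying Lemma~\ref{lem:bluecomponents} to each tree component (each of which meets $R$) and summing yields at most $d\abs{R} \le 2k^2(\ell+14k)^2$ blue components in total. On the other hand, the number of blue vertices is at least $g_1(k,\ell)-\abs{R}\ge 6k^2(\ell+14k)^2$, which exceeds $2\cdot 2k^2(\ell+14k)^2$. Thus some blue component must have at least $3$ vertices.

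The main obstacle will be the first step---precisely correlating the degree in the contracted block tree (a somewhat custom object depending on the choice of root and the identification of each cut vertex with its parent) with the components counted by Lemma~\ref{lem:boundunmarked}(2). This requires carefully checking that the subtrees hanging off $B$ in ${\cal T}_{G'}$ are in one-to-one correspondence with the components of $G'-V(B)$ adjacent to $B$, and in particular that no two such subtrees accidentally merge into one component after $V(B)$ is removed (which is ruled out by the fact that the block tree is acyclic).
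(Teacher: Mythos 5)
Your proof is correct and follows essentially the same route as the paper: every component of ${\cal T}_{G'}$ meets $R$ by Reduction Rule~\ref{rule:blockcomponent}, the degree bound $k(\ell+14k)$ comes from Lemma~\ref{lem:boundunmarked}(2), and combining Lemmata~\ref{lem:countingred} and~\ref{lem:bluecomponents} with the count of blue vertices gives the pigeonhole contradiction. Your extra care in identifying the neighbors of a block vertex in ${\cal T}_{G'}$ with the components of $G'-V(B)$ adjacent to $B$ is a detail the paper leaves implicit, and it checks out.
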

\begin{proof}
Notice that every component in ${\cal T}_{G'}$ has at least one red vertex since $(G,k)$ is reduced with respect to Reduction Rule~\ref{rule:blockcomponent}. The degree of ${\cal T}_{G'}$ is bounded by $k(\ell+14k)$ by Lemma~\ref{lem:boundunmarked}. By combining Lemmata~\ref{lem:countingred} and~\ref{lem:bluecomponents} with $d=k(\ell+14k)$, we observe that the number of blue components of ${\cal T}_{G'}-R$ is bounded by 
\[d\cdot \abs{R}\leq k(\ell+14k)\cdot 2k(\ell+14k)=2k^2(\ell+14k)^2\]
Also, the total number of blue vertices in ${\cal T}_{G'}$ is at least
\[\abs{V({\cal T}_{G'})}-\abs{R} \geq\abs{V({\cal T}_{G'})}- 2k(\ell+14k)\ge 6k^2(\ell+14k)^2,\]
and therefore, ${\cal T}_{G'}$ has a blue component having at least $3$ vertices. 
\end{proof}

Lemma~\ref{lem:longbluepath} and the property of two phase coloring is essential for the proof of our main result in this subsection.

\begin{proof}[Proof of Theorem~\ref{thm:largecompletedegree}]
Let $(G,k)$ be a reduced instance with $\abs{V(G)}\geq k+g_1(k,\ell)g_2(k,\ell)$ 
and $S\subseteq V(G)$ be a set of size at most $k$ such that $G-S$ is a block graph. To derive contradiction, suppose that for every $v\in S$, $v$ has complete degree at most $\ell$ in $G$.
Then $G'=G-S$ has at least $g_1(k,\ell)g_2(k,\ell)$
vertices.
Let $p$ be the number of blocks of $G'$. 
From Lemma~\ref{lem:boundunmarked} and the fact that each cut vertex is contained in at least two blocks, 
we obtain that
\[  \abs{V(G')}\le p((k+1)^2+7k^2) + \frac{1}{2}pk(\ell+14k) \le p\cdot g_2(k,\ell).\]
Therefore, we have $p\ge g_1(k,\ell)$.
By Lemma~\ref{lem:longbluepath}, 
$\mathcal{T}_{G'}$ contains a blue component $P$ on at least $3$ vertices. 

We claim that $P$ is (i) a path, and (ii) each of its two end vertices, and no other, is adjacent with exactly one red vertex. Let us prove (i) first. Let $W$ be the unique block vertex in $P$ which is closest to the root. Notice that $W$ is not the root itself since the instance is reduced with respect to Reduction Rule~\ref{rule:blockcomponent} and thus the root is a red vertex. Hence $W$ has a unique parent which is red. 
For any $Z$ which is a leaf in the subtree $P$, it is adjacent with at least one red vertex. Indeed, if not, $Z$ is a leaf in ${\cal T}_{G'}$. Then by Reduction Rule~\ref{rule:cutvertex}, the block $Z$ (possibly except for its unique cut vertex) should have been removed from $G$, a contradiction. Note that any red vertex adjacent with $Z$ is a child of $Z$ since the path from $Z$ to $W$ is blue and $W\neq Z$. Furthermore, the subtree $P$ has exactly one leaf since otherwise, the second phase of coloring must have colored the branching vertices contained in $P$, a contradiction. This establishes (i). For (ii), observe that if (ii) does not hold, then some vertex of $P$ must have been colored in the second phase, a contradiction.

Now, with $P$ together with the two red vertices incident with $V(P)$, we can apply Reduction Rule~\ref{rule:blockcutpath}, a contradiction. Therefore, we conclude that there exists a vertex $v\in S$ such that $v$ has complete degree at least $\ell+1$ in $G$.
\end{proof}

\section{Reducing the instance with large complete degree}\label{sec:largecomplete}

We introduce the last rule, which will be used when $G$ has a vertex of large complete degree.
We use the well-known technique, called the $\alpha$-expansion lemma, which is already used in several kernelization algorithms~\cite{Thomasse2009, CyganPPW2010, MisraPRS2012, CyganPPO2013}.
One notable difference from other approaches is that, to guarantee the equivalence, we add some paths in the given graph, and thus increase the number of vertices. However, we show that our rule decreases $n+m^*$ where $m^*$ is the number of edges whose both degrees are at least $3$, by using the $3$-expansion lemma instead of the $2$-expansion lemma.

\begin{RULE}[Large complete degree rule]\label{rule:largecompletedegree} 
Let $v\in V(G)$ and $X\subseteq V(G)\setminus \{v\}$ with $\abs{X}\le 7k$.
Let $\mathcal{C}$ be a set of connected components of $G-(X\cup \{v\})$ and let $\phi:X\rightarrow \binom{\mathcal{C}}{3}$ such that
\begin{itemize}
\item for each $C\in \mathcal{C}$, $G[\{v\}\cup V(C)]$ is a block graph, $v$ has a neighbor in $C$, and there exists a vertex $x\in X$ that has a neighbor in $C$,  
\item for $x\in X$, $\phi(x)$ is a subset of $\mathcal{C}$ where each graph in $\phi(x)$ has a neighbor of $x$, and  
\item the sets in $\{\phi(x) :x\in X\}$ are pairwise disjoint.
\end{itemize}
Then, remove all edges between $v$ and every component of ${\cal C}$, and add two internally vertex-disjoint paths of length two between $v$ and each vertex $x\in X$. (All of the new vertices in these paths have degree $2$ in the resulting graph). If a component of  $\mathcal{C}$ has a vertex of degree $1$ in the resulting graph, then we remove the vertex. See Figure~\ref{fig:rule6}.
\end{RULE}

\begin{figure}\center
  \tikzstyle{v}=[circle, draw, solid, fill=black, inner sep=0pt, minimum width=3pt]
  \begin{tikzpicture}[scale=0.06,rotate=-90]
    \draw (0,30) node [above right] {$v$} {};
    \draw (0,30) node [v] {}; 

    \draw[|<->|,thick] (40,50)
    -- node[below] {size $\le 7k$} (40,10);
    
     \draw [color=gray]  (30,30) ellipse (4 and 20);
    
    \foreach \x in {10,20,..., 50}
    {
      \draw (0,30) -- (10,\x)node[v]{} [snake=bumps]--(20,\x) node [v]{};
      \draw [color=gray]  (15,\x) ellipse (7 and 3);
    }
      \draw [color=gray, fill=yellow]  (15,-10) ellipse (7 and 3);
      \draw [color=gray, fill=yellow]  (15,0) ellipse (7 and 3);
      \draw [color=gray, fill=yellow]  (15,10) ellipse (7 and 3);
      \draw [color=gray, fill=yellow]  (15,20) ellipse (7 and 3);
      \draw [color=gray, fill=yellow]  (15,30) ellipse (7 and 3);
      \draw [color=gray, fill=yellow]  (15,40) ellipse (7 and 3);
      \draw (0,30) -- (10,-10)node[v]{} [snake=bumps]--(20,-10) node [v]{};
      \draw (0,30) -- (10,0)node[v]{} [snake=bumps]--(20,0) node [v]{};
      \draw (0,30) -- (10,10)node[v]{} [snake=bumps]--(20,10) node [v]{};
      \draw (0,30) -- (10,20)node[v]{} [snake=bumps]--(20,20) node [v]{};
      \draw (0,30) -- (10,30)node[v]{} [snake=bumps]--(20,30) node [v]{};
      \draw (0,30) -- (10,40)node[v]{} [snake=bumps]--(20,40) node [v]{};
\draw [blue, fill=yellow] (25,10) rectangle (35,22);    
    
   \draw (30,12)node[v]{} --(20,0) node [v]{};
   \draw (30,12)node[v]{} --(20,-10) node [v]{};

   \draw (30,12)node[v]{} --(20,10) node [v]{};
   \draw (30,20)node[v]{} --(20,10) node [v]{};
   \draw (30,12)node[v]{} --(20,20) node [v]{};
   \draw (30,20)node[v]{} --(20,20) node [v]{};
   \draw (30,12)node[v]{} --(20,30) node [v]{};
   \draw (30,20)node[v]{} --(20,40) node [v]{};
  
   \draw (30,32)node[v]{} --(20,50) node [v]{};
   \draw (30,38)node[v]{} --(20,50) node [v]{};
   \draw (30,42)node[v]{} --(20,50) node [v]{};

  \end{tikzpicture}\qquad\quad
    \begin{tikzpicture}[scale=0.06,rotate=-90]
    \draw (0,30) node [above right] {$v$} {};
    \draw (0,30) node [v] {}; 

    \draw[|<->|,thick] (40,50)
    -- node[below] {size $\le 7k$} (40,10);
    
     \draw [color=gray]  (30,30) ellipse (4 and 20);
    
    \foreach \x in {10,20,..., 50}
    {
      \draw (10,\x)node[v]{} [snake=bumps]--(20,\x) node [v]{};
      \draw [color=gray]  (15,\x) ellipse (7 and 3);
    }
       \draw [color=gray, fill=yellow]  (15,-10) ellipse (7 and 3);
      \draw [color=gray, fill=yellow]  (15,0) ellipse (7 and 3);
     \draw [color=gray, fill=yellow]  (15,10) ellipse (7 and 3);
      \draw [color=gray, fill=yellow]  (15,20) ellipse (7 and 3);
      \draw [color=gray, fill=yellow]  (15,30) ellipse (7 and 3);
      \draw [color=gray, fill=yellow]  (15,40) ellipse (7 and 3);
      \draw (10,10)node[v]{} [snake=bumps]--(20,10) node [v]{};
      \draw (10,20)node[v]{} [snake=bumps]--(20,20) node [v]{};
    \draw (10,30)node[v]{} [snake=bumps]--(20,30) node [v]{};
      \draw (10,40)node[v]{} [snake=bumps]--(20,40) node [v]{};
  
\draw [blue, fill=yellow] (25,10) rectangle (35,22);    

     \draw (0,30) -- (10,50);
     \draw (30,12)node[v]{} --(20,0) node [v]{};
   \draw (30,12)node[v]{} --(20,-10) node [v]{};
   
    \draw (10,-10)node[v]{} [snake=bumps]--(20,-10) node [v]{};
      \draw (10,0)node[v]{} [snake=bumps]--(20,0) node [v]{};
      \draw (30,12)node[v]{} --(20,10) node [v]{};
   \draw (30,20)node[v]{} --(20,10) node [v]{};
   \draw (30,12)node[v]{} --(20,20) node [v]{};
   \draw (30,20)node[v]{} --(20,20) node [v]{};
   \draw (30,12)node[v]{} --(20,30) node [v]{};
   \draw (30,20)node[v]{} --(20,40) node [v]{};
  
   \draw (30,32)node[v]{} --(20,50) node [v]{};
   \draw (30,38)node[v]{} --(20,50) node [v]{};
   \draw (30,42)node[v]{} --(20,50) node [v]{};
   

         \draw (0,30) -- (0,10)-- (30,12);
    \draw (0,30) --(2, 12)-- (30, 12);
         \draw (0,30) -- (4,14)-- (30,20);
    \draw (0,30) -- (6, 16)--(30, 20);

  \end{tikzpicture}
  \caption{Reduction Rule 6}
  \label{fig:rule6}
\end{figure}
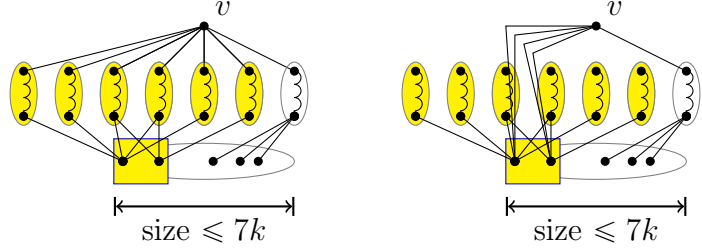

As we discussed, we clarify that it decreases $n+m^*$ where $m^*$ is the number of edges whose both end vertices have degree at least $3$.
Since $\abs{\mathcal{C}}\ge 3\abs{X}$ and $n+m^*$ is increased by $2\abs{X}$ by adding paths of length $2$ from $v$ to each vertex of $X$,
it is sufficient to show that for each $C\in \mathcal{C}$, $n+m^*$ is decreased by at least $1$ by removing the edges between $v$ and $C$.
Let $C\in \mathcal{C}$. If $\abs{N_G(v)\cap C}\ge 3$, then it is trivial.
First assume that $\abs{N_G(v)\cap C}=2$. Then $C$ has more than two vertices, or there exists a vertex $x\in X$ that has a neighbor on $N_G(v)\cap C$. In either case, it is not difficult to verify that one of the vertex in $N_G(v)\cap C$ has degree at least $3$ in $G$.
Therefore, $m^*$ is decreased by at least $1$ when removing the edges between $v$ and $C$.
Now, let us assume that $N_G(v)\cap C=\{w\}$ for some $w\in V(C)$.
If $w$ has degree $2$, then after removing the edge $vw$, we also remove $w$ following Reduction Rule~\ref{rule:largecompletedegree}.
Thus, $n$ is decreased by $1$. Otherwise, removing $vw$ decreases $m^*$ by $1$.
We conclude that $n+m^*$ is always decreased when applying Reduction Rule~\ref{rule:largecompletedegree}.

Now we describe how to obtain a polynomial-size kernel from a given instance. The algorithm presented in the following theorem is used as a subroutine.
\begin{THM}[$\alpha$-expansion lemma~\cite{Thomasse2009}]\label{lem:expansionlemma}
Let $\alpha$ be a positive integer. Let $F$ be a bipartite graph on the bipartition $(X, Y )$ with $\abs{Y}\ge \alpha \abs{X}$ such that every vertex of $Y$ has at least one neighbor in $X$. Then there exist nonempty subsets $X'\subseteq X$ and $Y'\subseteq Y$ and a function $\phi:X'\rightarrow \binom{Y'}{\alpha}$ such that 
\begin{itemize}
\item $N_F(Y')\cap X=X'$,
\item $\phi(x)\subseteq N_F(x)$ for each $x\in X'$, and  
\item the sets in $\{\phi(x):x\in X'\}$ are pairwise disjoint.
\end{itemize}
In addition, such pair of subsets $X', Y'$ can be computed in polynomial time in $\alpha\abs{V(F)}$.
\end{THM}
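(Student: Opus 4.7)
The plan is to proceed by induction on $\abs{X}$ via K\"onig's theorem applied to an $\alpha$-blown-up auxiliary bipartite graph. Construct $F^{*}$ on the bipartition $(X^{*}, Y)$, where $X^{*}$ consists of $\alpha$ copies $(x,1), \ldots, (x,\alpha)$ of every $x \in X$, and $(x,i)y$ is an edge of $F^{*}$ iff $xy \in E(F)$. Crucially, a matching in $F^{*}$ saturating $X^{*}$ corresponds exactly to a choice of pairwise disjoint $\alpha$-subsets $\phi(x) \subseteq N_{F}(x)$ for every $x \in X$. The base case $\abs{X} = 1$ is then immediate; and whenever $F^{*}$ admits a matching saturating $X^{*}$, setting $X' := X$ and $Y' := \bigcup_{x \in X} \phi(x)$ finishes the proof, since $N_{F}(Y') \cap X = X$ because every $x \in X$ is adjacent to its $\alpha$ images in $Y'$.

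Suppose instead that no such matching exists. By K\"onig's theorem, there is a minimum vertex cover $C = C_{X^{*}} \cup C_{Y}$ of $F^{*}$ with $\abs{C} < \alpha \abs{X}$. The key technical step is an exchange argument showing that we may assume $C_{X^{*}} = X_{0} \times \{1, \ldots, \alpha\}$ for some $X_{0} \subseteq X$: if some copies of $x$ are in $C_{X^{*}}$ but not all, the uncovered copy forces every neighbor of $x$ into $C_{Y}$, so the already-covered copies can be dropped from $C$, contradicting minimality. Setting $Y_{0} := Y \setminus C_{Y}$ and counting with $\abs{C} < \alpha\abs{X}$ gives both $\abs{Y_{0}} > \alpha \abs{X_{0}}$ and $\abs{X_{0}} < \abs{X}$. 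Because every vertex in $Y_{0}$ has an $F$-neighbor in $X$ but $N_{F}(Y_{0}) \subseteq X_{0}$, one also deduces $X_{0} \neq \emptyset$ (otherwise $C_{Y} \supseteq Y$ and $\abs{C} \geq \alpha\abs{X}$), and the bipartite graph $F[X_{0} \cup Y_{0}]$ on the bipartition $(X_{0}, Y_{0})$ satisfies the hypothesis of the theorem with the same parameter $\alpha$.

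Applying the inductive hypothesis to $F[X_{0} \cup Y_{0}]$ yields $X' \subseteq X_{0}$, $Y' \subseteq Y_{0}$ and $\phi : X' \to \binom{Y'}{\alpha}$, and these satisfy the conclusion for $F$ as well: disjointness of the $\phi(x)$ and $\phi(x) \subseteq N_{F}(x)$ carry over verbatim, while $N_{F}(Y') \cap X = N_{F}(Y') \cap X_{0} = X'$ because every $y \in Y' \subseteq Y_{0}$ has its $F$-neighbors confined to $X_{0}$. For the algorithmic claim, the whole procedure reduces to a minimum-vertex-cover computation on a bipartite graph (polynomial time via the K\"onig correspondence with bipartite matching), together with at most $\abs{X}$ levels of recursion on strictly smaller instances, giving an overall polynomial-time bound in $\alpha \abs{V(F)}$. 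The main obstacle, and essentially the only nontrivial step, is the exchange argument ensuring that the minimum cover can be taken to respect the copy structure of $X^{*}$; the remaining steps are routine counting and bookkeeping.
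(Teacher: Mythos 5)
The paper does not prove this statement: it is imported verbatim from Thomass\'e's work and used as a black box, so there is no in-paper argument to compare against. Judged on its own, your proof is correct and is essentially the standard proof of the $\alpha$-expansion lemma: the $\alpha$-fold blow-up $F^{*}$, K\"onig's theorem, the exchange argument showing a minimum cover can be assumed to contain either all or none of the copies of each $x$, and the recursion on $(X_{0},Y_{0})$ with $\abs{Y_{0}}>\alpha\abs{X_{0}}$ and $\abs{X_{0}}<\abs{X}$ all check out, and the verification that the inductively obtained $X',Y',\phi$ remain valid in $F$ (using $N_{F}(Y_{0})\subseteq X_{0}$) is exactly right. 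The only point worth flagging is the degenerate case $X=\emptyset$, where no nonempty $X'$ can exist; the statement implicitly assumes $X\neq\emptyset$ (which then forces $Y\neq\emptyset$), and your argument is complete under that reading. The algorithmic claim also follows as you say, since each level of recursion is a bipartite matching computation on a graph with at most $\alpha\abs{V(F)}$ vertices and the recursion depth is at most $\abs{X}$.
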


\begin{THM}\label{thm:reducinglargecomplete}
Reduction Rule~\ref{rule:largecompletedegree} is safe.
\end{THM}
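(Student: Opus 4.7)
The plan is to show that $(G, k)$ is a \YES-instance if and only if $(G', k)$ is, where $G'$ is the graph produced by Reduction Rule~\ref{rule:largecompletedegree}. The key structural feature of $G'$ is that, for each $x \in X$, the four vertices $\{v, u_{x,1}, x, u_{x,2}\}$ induce a $C_4$, and the new vertices $u_{x,i}$ participate in no other obstruction. A routine exchange argument (replace any $u_{x,i}\in S'$ by $x$) therefore lets us assume that every solution $S'$ of $G'$ satisfies: for every $x\in X$, either $v\in S'$ or $x\in S'$.

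For the forward direction, take a solution $S$ of $G$ with $|S|\le k$. If $v\in S$ or $X\subseteq S$, I would set $S':=S\setminus D$, where $D$ is the set of pendants deleted by the rule. When $v\in S$, $G'-S'$ is the block graph $G-S$ restricted to $V(G)\setminus D$ together with pendants $u_{x,i}$ attached to each $x\notin S$ (and isolated copies otherwise), which preserves the block-graph property. When $v\notin S$ but $X\subseteq S$, the removal of the $v$-to-$\mathcal{C}$ edges only splits each block $\{v\}\cup K$ of $G-S$ (where $K\subseteq V(C)\setminus S$ is a clique; note that $G[\{v\}\cup V(C)]$ being a block graph forces $N_G(v)\cap V(C)$ to decompose into disjoint cliques) into $\{v\}$ and $K$, which introduces no new obstruction.

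The hard case is $v\notin S$ and $X\setminus S\neq\emptyset$. The heart of the argument is the following subclaim: for every $x\in X\setminus S$, at most one component of $\phi(x)$ is disjoint from $S$. Indeed, two $S$-free components $C,C'\in\phi(x)$ would give two internally vertex-disjoint $v$-$x$ paths in $G-S$; their union is a cycle of length at least $4$, but every cycle in a block graph lies in a single clique (block), forcing internal vertices of $C$ and $C'$ to be adjacent, contradicting that they belong to distinct components of $G-(X\cup\{v\})$. Setting $T:=S\cap\bigcup_{x\in X\setminus S}\bigcup_{C\in\phi(x),\,C\cap S\neq\emptyset}V(C)$, pairwise-disjointness of the $\phi(x)$'s yields $|T|\ge 2|X\setminus S|$. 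I would define $S':=(S\setminus T)\cup\{v\}\cup(X\setminus S)$, giving $|S'|\le |S|-|X\setminus S|+1\le k$ using $|X\setminus S|\ge 1$. For correctness, observe $V(G)\setminus S'\subseteq V(G)\setminus(X\cup\{v\})$, so $G'[V(G)\setminus S']$ has no edges into $X\cup\{v\}$ and decomposes as a disjoint union over components $C$ of $G-(X\cup\{v\})$: each $C$ contributes either all of $G[V(C)]$ (a block graph, as an induced subgraph of $G[\{v\}\cup V(C)]$) or $G[V(C)\setminus S]$ (a block graph, as an induced subgraph of $G-S$); the vertices $u_{x,i}$ are isolated since $X\subseteq S'$.

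For the backward direction, given such $S'$ for $G'$, I would set $S:=S'\cap V(G)$ and verify $G-S$ is a block graph by cases on whether $v\in S'$. If $v\in S'$, then $G-S$ restricted to $V(G)\setminus D$ coincides with $G'-S'$ restricted to $V(G)$, which is a block graph; each $D$-vertex has $G$-degree at most two with $v$ as one of its neighbors, so in $G-S$ it appears as a pendant or an isolated vertex, preserving the block-graph property. If $v\notin S'$ (so $X\subseteq S'$), then $G-S$ is obtained from the block graph $(G'-S')\cap V(G)$ by re-adding the $v$-$\mathcal{C}$ edges; the same clique-per-component structure from the forward argument shows that these re-attach each clique $K\subseteq V(C)$ as a block $\{v\}\cup K$ in the block-tree of $G-S$ without creating any obstruction. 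The most delicate point throughout is ruling out any new induced cycle of length at least four or diamond passing through a pendant in $D$, and the systematic appeal to the local block-graph property of $G[\{v\}\cup V(C)]$ (which confines the local structure of $v$ inside each $V(C)$ to disjoint cliques) is what handles these.
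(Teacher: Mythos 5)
Your proof is correct and follows essentially the same strategy as the paper's: normalize solutions of $G'$ so that $v$ or all of $X$ is deleted (via the $C_4$'s through the new degree-2 vertices), and in the reverse direction exchange the solution's intersection with the components for $X\setminus S$, using pairwise-disjointness of the $\phi(x)$'s together with the observation that two untouched components of $\phi(x)$ would yield an obstruction through $v$ and $x$. The only differences are cosmetic: your replacement set additionally includes $v$ (paying for it with the bound $\abs{T}\ge 2\abs{X\setminus S}$, where the paper instead argues $\abs{A_2}\ge\abs{A_1}$ and keeps $v$), and your justification of the key obstruction via ``every cycle of a block graph lies in one clique'' is in fact slightly cleaner than the paper's, which asserts an induced cycle and glosses over a possible $vx$ chord.
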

\begin{proof}
Let $G$ be a graph and let $v\in V(G)$ and $X\subseteq V(G)\setminus \{v\}$ with $\abs{X}\le 7k$.
Let $\mathcal{C}$ be a set of connected components of $G-(X\cup \{v\})$ and let $\phi:X\rightarrow \binom{\mathcal{C}}{3}$ such that
\begin{itemize}
\item for each $C\in \mathcal{C}$, $G[\{v\}\cup V(C)]$ is a block graph, 
\item $\phi(x)$ is a subset of $\mathcal{C}$ whose components have a neighbor of $x$, and  
\item the graphs in $\{\bigcup_{C\in \phi(x)}V(C) :x\in X\}$ are pairwise disjoint.
\end{itemize}
Let $G'$ be the resulting graph obtained by using Reduction Rule~\ref{rule:largecompletedegree}, and
let $R$ be the new vertices of degree $2$ linking between $v$ and $X$ in $G'$.
We prove that $(G,k)$ is a \YES-instance if and only if $(G', k)$ is a \YES-instance.

First suppose that $G'$ has a vertex set $A$ with $\abs{A}\le k$ such that $G'-A$ is a block graph.
Suppose a vertex $r\in R$ is contained in $A$ and let $r'$ be a neighbor of $r$.
Then $G'-(A\setminus \{r\}\cup \{r'\})$ is also a block graph, as $r$ and the twin of $r$ become vertices of degree $1$ in $G'-(A\setminus \{r\}\cup \{r'\})$ and thus they cannot be contained in any obstruction.
Since any two paths of length $2$ traversing $R$ form an induced subgraph isomorphic to $C_4$, 
we may assume that $A$ contains one of the neighbors of $r$.
That is,  
we have $v\in A$ or $X\subseteq A$.
If $v\in A$, then $G-A$ is an induced subgraph of $G'-A$, and therefore, $G-A$ is a block graph.
Suppose $X\subseteq A$ and let $B$ be a obstruction in $G-A$.
Then $B$ cannot be contained in $G[(\bigcup_{C\in \mathcal{C}} V(C)) \cup \{v\}]$ because 
$G[(\bigcup_{C\in \mathcal{C}} V(C)) \cup \{v\}]$ is a block graph.
Thus, $B$ should be contained in $G-A-(\bigcup_{C\in \mathcal{C}} V(C))$ that is an induced subgraph of $G'-A$, and it contradicts to that $G'-A$ is a block graph.

Now suppose that $G$ has a vertex set $A$ with $\abs{A}\le k$ such that $G-A$ is a block graph.
If $v\in A$, then it is easy to observe that $G'-A$ is a block graph as degree $1$ vertices cannot be contained in an obstruction.
Hence, we may assume that $v\notin A$.
Let $A_1:=X\setminus A$ and $A_2:=A\cap (\bigcup_{C\in \mathcal{C}}V(C))$.
It is not hard to see that $G-(A\setminus A_2\cup A_1)$ is also a block graph as for each $C\in \mathcal{C}$, $G[\{v\}\cup V(C)]$ is a block graph and $N_G(C)\subseteq \{v\}\cup X$.
Now we check that $\abs{A_2}\ge \abs{A_1}$. For contradiction, suppose $\abs{A_2}<\abs{A_1}$.
Since the graphs in $\{\bigcup_{C\in \phi(x)}V(C) :x\in X\}$ are pairwise disjoint, 
there exists a vertex $a$ in $A_1$ such that $\phi(a)$ contains no vertex from $A_2$. 
Then two components in $\phi(a)$ with the vertices $v$ and $a$ forms an induced cycle of length at least $4$, which is contradiction.
Thus, $\abs{A_2}\ge \abs{A_1}$, and therefore $A\setminus A_2\cup A_1$ is also a proper deletion set of size at most $k$ in $G$.
As all vertices in $R$ become vertices of degree $1$ in $G'-(A\setminus A_2\cup A_1)$, 
$G'-(A\setminus A_2\cup A_1)$ is a block graph, as required.
\end{proof}

\begin{proof}[Proof of Theorem~\ref{thm:main1}]
Given an instance $(G,k)$, we exhaustively apply Reduction Rules \ref{rule:blockcomponent}-\ref{rule:disjoint} to obtain a reduced instance. 
If a reduced graph $G$ has at least $k+g_1(k,29k)g_2(k,29k)$ vertices, then 
by Theorem~\ref{thm:largecompletedegree}, $G$ has a vertex of complete degree  at least $29k$.
By Proposition~\ref{prop:generalcompletedegree}, 
we can find in polynomial time a vertex $v$ and a vertex set $S_v\subseteq V(G-v)$ such that $G-S_v$ has no block graph obstruction containing $v$, and 
$G[N_G(v)\setminus S_v]$ has at least $29k-7k=22k$ components.
Note that there are at most $k$ components of $G- (\{v\}\cup S_v)$ that may contain an obstruction, and
for each component $C$ of $G- (\{v\}\cup S_v)$, at most one component of $G[N_G(v)\setminus S_v]$ can be contained in $C$.
Let $\mathcal{C}$ be the set of components of $G- (\{v\}\cup S_v)$ which (i) contains a component of $G[N_G(v)\setminus S_v]$, and (ii) has no block graph obstructions.
Since $\abs{\mathcal{C}}\ge 22k-k=21k$ and $\abs{S_v}\le 7k$,  
using Theorem~\ref{lem:expansionlemma}, we can find in polynomial time sets $\mathcal{C}'\subseteq \mathcal{C}$ and $S_v'\subseteq S_v$ and a function $\phi:S_v'\rightarrow \binom{\mathcal{C}'}{3}$ such that
\begin{itemize}
\item the set of vertices in $S_v$ that has a neighbor in $\bigcup_{C\in \mathcal{C}'}V(C)$ is $S_v'$, 
\item for $x\in S_v'$, $\phi(x)$ is a subset of $\mathcal{C}$ where each graph in $\phi(x)$ has a neighbor of $x$, and  
\item the sets in $\{\bigcup_{C\in \phi(x)}V(C) :x\in S_v'\}$ are pairwise disjoint.
\end{itemize}
Note that for each $C\in \mathcal{C}'$, $G[\{v\}\cup V(C)]$ is a block graph, otherwise, it has an obstruction containing $v$, contradicting to the definition of $S_v$.
Furthermore, for each $C\in \mathcal{C}'$, there exists a vertex $x\in S_v'$ that has a neighbor in $C$, otherwise, we can reduce it using Reduction Rule~\ref{rule:cutvertex}.
So, we can apply Reduction Rule~\ref{rule:largecompletedegree} to reduce this instance. 
We apply these reductions recursively.
As we discussed, each step decreases $n+m^*$ where $m^*$ is the number of edges whose both end vertices have degree $3$, so, it will terminate in polynomial time, and at the final step, the resulting graph will have less than $k+g_1(k, 29k)g_2(k, 29k)=\mathcal{O}(k^{6})$ vertices.
\end{proof}
	
\section{A fixed parameter tractable algorithm}

The goal of this section is to prove Theorem~\ref{thm:main2} claiming an $O(10^k\cdot n^{O(1)})$-time algorithm for \BLOC. We apply iterative compression technique, which is established as a powerful tool to design FPT algorithms since it was first introduced by Reed, Smith and Vetta~\cite{ReedSV2004}. Our algorithm \BLOC\  requires as a subroutine an FPT algorithm for  the following disjoint version of \BLOC.

\vspace{2mm}

\noindent {\bf \disjointBLOC}
\begin{description}
\item[Input] A graph $G$, $S\subseteq V(G)$ such that both $G-S$ and $G[S]$ are block graphs, an integer $k$.
\item[Parameter] $k$.
\item[Task] Find a {\em solution} to $(G,S,k)$, i.e. a set $\tilde{S}\subseteq V(G)\setminus S$ such that $G-\tilde{S}$ is a block graph and $\abs{\tilde{S}}\leq k$, or correctly report that no such set exists.
\end{description}

We present an algorithm {\bf Block}$(G,S,k)$ which solves \disjointBLOC\ in time $O(3^{k+\ell}\cdot n^6)$, where $\ell$ is the number of connected components in $G[S]$. 

\begin{algorithm}
\caption{Algorithm for \BLOC}
\begin{algorithmic}[1]

\Procedure {{\bf Block}}{$G,S,k$}

\State {\bf if }$k \geq 0$ and $G$ is a block graph, {\bf return} $\emptyset$. \label{line:yes}
\State {\bf if} $k\leq 0$ and $V(G)\setminus S\neq \emptyset$, {\bf return} {\sc No}. \label{line:no}
\If {$u,v,w\in V(G)\setminus S$ such that $G[S\cup \{u,v,w\}]$ is not a block graph} \label{line:ifsmallobs}
	\State \Comment{$u,v,w$ are not necessarily distinct if $\abs{V(G)\setminus S}\leq 2$}
	\State {\bf Block}$(G-u,S,k-1)\cup \{u\}$	\Comment{Small Set Branching Rule}			\label{line:smallobs1}
	\State {\bf Block}$(G-v,S,k-1)\cup \{v\}$				\label{line:smallobs2}
	\State {\bf Block}$(G-w,S,k-1)\cup \{w\}$				\label{line:smallobs3}
\ElsIf {there is $uv\in E(G-S)$ and $x,y\in N_S(\{u,v\})$ such that\\			\label{line:ifcomponent}
\qquad \quad $x,y$ belong to distinct connected components of $G[S]$}
		\State  {\bf Block}$(G-u,S,k-1)\cup \{u\}$	\Comment{Component Branching Rule}	\label{line:component1}
		\State  {\bf Block}$(G-v,S,k-1)\cup \{v\}$											\label{line:component2}
		\State  {\bf Block}$(G,S\cup\{u,v\},k)$												\label{line:component3}
\Else
	\State Let $B$ be a leaf block of $G-S$ and $\partial_{G-S}(B)=\{b\}$.					\label{line:bypass1}
	\State $G'\leftarrow G- B\setminus \partial_{G-S}(B) +\{bw:w\in N_S(B)\} $	 \Comment{Bypass Rule} \label{line:bypass2}
	\State  {\bf Block}$(G',S,k)$.												\label{line:bypass3}			\label{line:bypass2}
\EndIf
\EndProcedure 
\end{algorithmic}
\end{algorithm}

Let us establish that {\bf Block}($G,S,k$) correctly returns a solution to $(G,S,k)$ if it is a {\sc Yes}-instance, and returns {\sc No} otherwise. Notice that if $(G,S,k)$ does not meet the condition at line~\ref{line:yes}, then $V(G)\setminus S$ is non-empty and thus one of the steps at lines~\ref{line:no},~\ref{line:ifsmallobs},~\ref{line:ifcomponent}, or~\ref{line:bypass1} will be executed and some output will be returned at the end of the algorithm {\bf Block}$(G,S,k)$. The execution of {\bf Block}($G,S,k$) can be represented by a search tree where each node corresponds to a call made during the execution. For the correctness of the algorithm, we use induction on the level of a call in the search tree. It is clear that 
lines~\ref{line:yes}--\ref{line:no}, corresponding to the base case, returns the output correctly. 
If the condition at line~\ref{line:ifsmallobs} is met, then any solution $\tilde{S}$ to $(G,S,k)$ must contain one of $u,v$ and $w$. Conversely, if $\tilde{S}$ is a solution returned by one of the calls {\bf Block} at lines~\ref{line:smallobs1}--\ref{line:smallobs3}, then $\tilde{S}$ together with $u,v,$ or $w$ is a solution to $(G,S,k)$. To see the correctness of lines~\ref{line:component1}--\ref{line:component3}, first notice that they enumerate all possible intersection of a solution $\tilde{S}\cap \{u,v\}$. Hence it suffices to verify that $G[S\cup \{u,v\}]$ is indeed a block graph. This is a consequence from the fact that $G$ does not meet the condition of line~\ref{line:ifsmallobs} for any (at most) three vertices.

The branching rules considered at lines~\ref{line:ifsmallobs}-\ref{line:smallobs3} and lines~\ref{line:ifcomponent}-\ref{line:component3} are called the \texttt{Small Set Branching} and \texttt{Component Branching}, respectively. Notice that an instance $(G,S,k)$ considered at line~\ref{line:bypass1} is {\em reduced with respect} to \texttt{Small Set Branching} and \texttt{Component Branching} or, simply put, {\em irreducible}: neither branching rules apply to $(G,S,k)$. For the correctness of the algorithm {\bf Block}, it remains to show that \texttt{Bypass Rule} at line~\ref{line:bypass2} is safe, that is, $\tilde{S}$ is a solution to the instance $(G',S,k)$ at line~\ref{line:bypass3} if and only if it is a solution to $(G,S,k)$. We need the following lemmata.

\begin{LEM}\label{lem:adjblock}
Let $(G,S,k)$ be an irreducible instance and $B$ be a leaf block of $G- S$. Then either $N_S(B)=\emptyset$ or there exists a single block $X$ of $G[S]$ such that $N_S(B)\subseteq X$. 
\end{LEM}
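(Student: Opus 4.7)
The plan is to prove that $N_S(B)$ is a clique of $G[S]$, since a clique in a block graph is contained in a single block, which then serves as the desired block $X$. Throughout I use that $B$ is a clique (being a block of the block graph $G-S$), that $|V(B)| \geq 2$ since a leaf block has a cut vertex, and that induced subgraphs of block graphs are again block graphs.

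First I would show that $N_S(B)$ lies in a single connected component of $G[S]$. Given distinct $x, y \in N_S(B)$, pick $u \in V(B)\cap N_G(x)$ and $v \in V(B) \setminus \{u\}$ with $y \in N_G(\{u,v\})$; this is possible because $|V(B)| \geq 2$. Since $B$ is a clique we have $uv \in E(G-S)$, and $x, y \in N_S(\{u,v\})$, so irreducibility with respect to \texttt{Component Branching} forces $x$ and $y$ into the same component of $G[S]$.

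Next I would show that any such pair $x, y$ shares a block of $G[S]$. Suppose not; then $xy \notin E(G)$, and a shortest $x$-$y$-path $P = p_0 p_1 \cdots p_\ell$ in $G[S]$ is induced with $\ell \geq 2$. Since \texttt{Small Set Branching} does not apply, $G[S \cup \{u, v, w\}]$ is a block graph for any $w \in V(G) \setminus S$, hence so is the induced subgraph $G[S \cup \{u,v\}]$. The structural heart of the argument is that $N_G(u) \cap V(P) \in \{\{p_0\}, \{p_0, p_1\}\}$: three consecutive neighbors of $u$ on $V(P)$ would induce a diamond $\{u, p_i, p_{i+1}, p_{i+2}\}$, while two neighbors $p_i, p_j$ on $V(P)$ with $j - i \geq 2$ chosen to minimize $j - i$ would yield an induced cycle $u\,p_i\,p_{i+1}\cdots p_j\,u$ of length at least $4$. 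The analogous statement holds for $v$, giving $N_G(v) \cap V(P) \in \{\{p_\ell\}, \{p_{\ell-1}, p_\ell\}\}$.

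Let $a := \max\{i : u p_i \in E\}$ and $b := \min\{j : v p_j \in E\}$, so $a \in \{0,1\}$ and $b \in \{\ell-1, \ell\}$. When $a < b$, the induced subgraph on $\{u,v\} \cup \{p_a, p_{a+1}, \ldots, p_b\}$ has exactly the edges $uv$, $u p_a$, $v p_b$, and the consecutive edges of $P$, forming an induced cycle of length $b - a + 3 \geq 4$. When $a = b$, the constraints $a \leq 1$ and $b \geq \ell - 1 \geq 1$ force $\ell = 2$ and $a = b = 1$, and then $\{u, v, p_0, p_1\}$ induces a diamond. The case $a > b$ is impossible since $a \leq 1 \leq b$. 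Each case yields an obstruction in $G[S \cup \{u, v\}]$, a contradiction. Hence every pair in $N_S(B)$ is adjacent in $G[S]$, so $N_S(B)$ is a clique and lies in a single block $X$. The main obstacle is the coordinated control of $u$'s and $v$'s neighborhoods on $V(P)$ through the indices $a$ and $b$; in particular the tight corner $a = b = 1$, $\ell = 2$ is where a diamond arises rather than a long induced cycle.
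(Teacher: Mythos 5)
Your proof is correct and follows essentially the same route as the paper's: irreducibility under \texttt{Component Branching} places $x$ and $y$ in one component of $G[S]$, irreducibility under \texttt{Small Set Branching} makes $G[S\cup\{u,v\}]$ a block graph, and a path from $x$ to $y$ in $G[S]$ together with $u,v$ then yields the contradiction. The only difference is in the last step: the paper dispatches it in one line (a cycle through the non-adjacent pair $x,y$ cannot occur in a block graph), whereas you extract an explicit diamond or induced cycle via index bookkeeping on a shortest path; both arguments rest on the same implicit assumption $\abs{V(B)}\ge 2$, which the paper's usage of leaf blocks (with $\partial_{G-S}(B)=\{b\}$) supports.
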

\begin{proof}
Suppose the contrary and choose $x,y \in N_S(B)$ and two blocks $X,Y$ of $G[S]$ such that $x\in X\setminus Y$ and $y\in Y\setminus X$. Let $u,v$ be (not necessarily distinct) vertices of $B$ having $x,y$ as neighbors. As $(G,S,k)$ is reduced with respect to \texttt{Component Branching}, both $X$ and $Y$ belong to a single component of $G[S]$. Let $P$ be an $x,y$-path in $G[S]$ and observe that $P\cup \{u,v\}$ forms a cycle with $xy\notin E(G)$. This implies that $G[S\cup \{u,v\}]$ is not a block graph, contradiction to the assumption that $(G,S,k)$ is reduced with respect to \texttt{Small Set Branching}. 
\end{proof}

\begin{LEM}\label{lem:formblock}
Let $(G,S,k)$ be an irreducible instance and $B$ be a leaf block of $G- S$. Then $G[S\cup B]$ is a block graph.
\end{LEM}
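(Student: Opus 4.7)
The plan is a proof by contradiction: suppose $G[S \cup B]$ is not a block graph, so it contains some obstruction $H$, which is either an induced cycle $C_\ell$ with $\ell \geq 4$ or a diamond. I will argue that $H$ meets $B$ in at most three vertices, and then invoke the irreducibility of $(G,S,k)$ under the Small Set Branching rule to derive a contradiction.

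Let $H_B := V(H) \cap B$. The key step is to show $|H_B| \leq 3$. Since $B$ is a block of the block graph $G-S$, the set $B$ induces a clique in $G$. If $|H_B| \geq 4$, then $H$ would contain a $K_4$ as a subgraph; however, the maximum clique size of an induced cycle of length at least $4$ is $2$, and the maximum clique size of the diamond is $3$, so neither obstruction contains $K_4$, yielding the desired bound. Note that this step does not make any use of Lemma~\ref{lem:adjblock}, although that lemma would have been a natural starting point had the clique structure of $B$ not sufficed on its own.

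Once $|H_B| \leq 3$ is established, since $V(H) \subseteq S \cup H_B$, the obstruction $H$ is already an induced subgraph of $G[S \cup H_B]$. I then pick any three (not necessarily distinct) vertices $u,v,w \in V(G) \setminus S$ with $\{u,v,w\} \supseteq H_B$, which is always possible, taking three distinct vertices when $|V(G)\setminus S| \geq 3$ or allowing repetitions as permitted by the Small Set Branching rule otherwise. Passing from $G[S \cup H_B]$ to the induced supergraph $G[S \cup \{u,v,w\}]$ preserves all edges among $V(H)$ and introduces no new ones there, so $H$ remains an induced subgraph of $G[S \cup \{u,v,w\}]$. Hence $G[S \cup \{u,v,w\}]$ is not a block graph, contradicting the irreducibility of $(G,S,k)$ under Small Set Branching.

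The only substantive content is the clique-counting argument for $|H_B| \leq 3$; the rest is a direct appeal to the definition of irreducibility. The main obstacle, if any, is to correctly handle the corner case where $|V(G) \setminus S|$ is small and the Small Set Branching rule must use repeated vertices, but this is explicitly permitted by the algorithm's specification and does not affect the conclusion.
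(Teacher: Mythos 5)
Your proof is correct and follows essentially the same argument as the paper: any obstruction in $G[S\cup B]$ meets the clique $B$ in at most three vertices (since no obstruction contains a $K_4$), so it would trigger the \texttt{Small Set Branching} rule, contradicting irreducibility. The only cosmetic difference is that the paper first invokes Lemma~\ref{lem:adjblock} to restrict attention to $G[X\cup B]$ and then runs the same counting in the opposite order (irreducibility forces at least four vertices of the obstruction into $B$, which is impossible in a clique); as you observe, that preliminary reduction is not actually needed for the contradiction.
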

\begin{proof}
If  $N_S(B)= \emptyset$, then $G[S\cup B]$ is trivially a block graph. Therefore, we assume that $N_S(B)\neq \emptyset$.
Let $X$ be the block of $G[S]$ containing all vertices of $N_S(B)$, which exists by Lemma~\ref{lem:adjblock}. It suffices to show that $G[X\cup B]$ is a block graph. Suppose not and let $C\subseteq X\cup B$ be a vertex set which induces an obstruction for block graphs. Recall that $(G,S,k)$ is reduced with respect to \texttt{Small Set Branching}, and thus $C$ contains at least four vertices of $B$. This means that $G[C]$ is an induced cycle of length at least 5. However, the vertices of $C\cap B$ are pairwise adjacent, which is impossible. This completes the proof of our statement.
\end{proof}

\begin{LEM}\label{lem:represent}
Let $(G,S,k)$ be an irreducible instance and $B$ be a leaf block of $G- S$. Then there exists a vertex $u\in B$ such that $N_S(u)=N_S(B)$. 
\end{LEM}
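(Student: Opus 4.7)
The plan is to argue that the sets $\{N_S(u) : u \in B\}$ form a chain under inclusion, from which the conclusion is immediate by picking a vertex $u^*$ whose neighborhood in $S$ is maximal (so $N_S(u^*) = \bigcup_{u\in B} N_S(u) = N_S(B)$). The case $N_S(B) = \emptyset$ is trivial, so we may assume $N_S(B) \neq \emptyset$, and by Lemma~\ref{lem:adjblock} there is a block $X$ of $G[S]$ with $N_S(B) \subseteq X$.

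First I would set up the contradiction: suppose the neighborhoods are not nested, so there exist $u, v \in B$ and $x, y \in N_S(B)$ with $x \in N_S(u)\setminus N_S(v)$ and $y \in N_S(v)\setminus N_S(u)$. The goal is to exhibit an induced obstruction inside $G[S\cup B]$ and invoke Lemma~\ref{lem:formblock} for the contradiction. To do this I would examine the induced subgraph on $\{u,v,x,y\}$ and enumerate its edges: $uv \in E(G)$ because $B$ is a block in the block graph $G-S$ and hence a clique; $xy \in E(G)$ because $x,y\in X$ and $X$ is a clique of the block graph $G[S]$; $ux$ and $vy$ are edges by choice; and $uy$, $vx$ are non-edges by choice. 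This yields exactly the $C_4$ with vertex order $u,v,y,x$, contradicting Lemma~\ref{lem:formblock}.

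From the chain property, let $u^* \in B$ be a vertex with $N_S(u^*)$ maximal under inclusion. Then $N_S(u^*) \supseteq N_S(u)$ for every $u\in B$, hence $N_S(u^*) \supseteq \bigcup_{u\in B} N_S(u) = N_S(B)$; the reverse inclusion is trivial, so $N_S(u^*) = N_S(B)$ as required.

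The main obstacle is really only the small combinatorial observation that non-nestedness of two neighborhoods immediately produces the induced four-cycle; once that is framed correctly using the two cliques ($B$ in $G-S$ and $X$ in $G[S]$), the rest is routine. The previously established Lemmata~\ref{lem:adjblock} and~\ref{lem:formblock} do all the heavy lifting, since they guarantee both the common block $X$ containing $N_S(B)$ and the absence of any obstruction in $G[S\cup B]$.
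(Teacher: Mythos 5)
Your proof is correct and rests on the same key step as the paper's: using Lemma~\ref{lem:adjblock} to make $N_S(B)$ a clique and then extracting a four-vertex obstruction on $u,v\in B$ and $x,y\in N_S(B)$, contradicting Lemma~\ref{lem:formblock}. The only cosmetic difference is that you phrase it as nestedness of the neighborhoods $N_S(u)$ and pick a maximal one, whereas the paper directly contradicts the negation and allows the obstruction to be either a $C_4$ or a diamond (since it does not control the edge $vx$); the substance is identical.
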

\begin{proof}
If $\abs{N_S(B)}\leq 1$, the statement trivially holds. Assume that $\abs{N_S(B)}\geq 2$. Suppose the contrary, and choose $u\in B$ and $x,y\in N_S(B)$ such that $ux \in E(G)$ and $uy\notin E(G)$. Since $y\in N_S(B)$, there exists $v\in B$ such that $vy\in E(G)$. By Lemma~\ref{lem:adjblock}, the two vertices $x$ and $y$ belong to a single block of $G-S$, and thus are adjacent. Notice that $\{u,v,x,y\}$ induces either a diamond or a cycle of length 4. However, $G[S\cup B]$ is a block graph by Lemma~\ref{lem:formblock}, a contradiction.
\end{proof}

\begin{LEM}\label{lem:anothersol}
Let $(G,S,k)$ be an irreducible instance and $B$ be a leaf block of $G-S$. If there is a vertex set $\tilde{S}\subseteq V(G)\setminus S$ such that $G-\tilde{S}$ is a block graph, then there is $\tilde{S}'\subseteq V(G)\setminus S$ such that $G-\tilde{S}'$ is a block graph, $\abs{\tilde{S}'}\leq \abs{\tilde{S}}$ and $\tilde{S}'\cap (B\setminus \partial_{G-S}(B))=\emptyset$.
\end{LEM}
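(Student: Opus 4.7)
The plan is an extremal exchange argument: I would take $\tilde{S}'$ to be a valid deletion set of size at most $\abs{\tilde{S}}$ that minimises $\abs{\tilde{S}' \cap (B \setminus \partial_{G-S}(B))}$, and among those one with $\abs{\tilde{S}'}$ smallest. Suppose toward a contradiction that some $a \in \tilde{S}' \cap (B \setminus \partial_{G-S}(B))$ exists; by the choice of $\tilde{S}'$, the set $\tilde{S}' \setminus \{a\}$ is not a valid deletion set, so $G - (\tilde{S}' \setminus \{a\})$ contains an obstruction $O$ passing through $a$.

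The first step is to classify $O$ structurally. Since $a$ is a non-cut-vertex of $B$ in $G-S$, its neighbours in $G$ lie in $B \cup N_S(a)$, and by Lemma~\ref{lem:adjblock}, $N_S(B)$ is contained in a common block $X$ of $G[S]$. Since $B$ and $X$ are both cliques and $G[S \cup B]$ is a block graph by Lemma~\ref{lem:formblock}, a case check rules out (i) all induced cycles $C_k$ with $k \ge 4$ through $a$, because the two cycle-neighbours of $a$ would both lie in $B \cup X$ and force a chord or a triangle, and (ii) all diamonds with $\abs{O \cap B} \ge 2$, because these would embed into $G[S \cup B]$, contradicting Lemma~\ref{lem:formblock}. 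The only surviving possibility is that $O$ is a diamond $\{a, p, q, r\}$ with middle vertices $p, q \in N_S(a) \subseteq X$ and outer vertex $r \in V(G) \setminus (B \cup S)$ adjacent to both $p$ and $q$ but not to $a$.

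The proposed exchange is $\tilde{S}'' := (\tilde{S}' \setminus \{a\}) \cup \{r\}$. Since $r \in O$ and $r \ne a$, we have $r \notin \tilde{S}'$, so $\abs{\tilde{S}''} \le \abs{\tilde{S}'}$ and $\abs{\tilde{S}'' \cap (B \setminus \partial_{G-S}(B))}$ is strictly smaller, contradicting the choice of $\tilde{S}'$ as soon as $\tilde{S}''$ is shown to be a valid deletion set. This validity check is the main obstacle. A would-be residual obstruction in $G - \tilde{S}''$ must, by the same structural classification, be another diamond $\{a, p', q', r'\}$ with $r' \ne r$ and $r' \notin \tilde{S}'$. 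Here I would invoke Lemma~\ref{lem:represent} to obtain $u^* \in B$ with $N_S(u^*) = N_S(B)$: then $\{u^*, p', q', r'\}$ is also a diamond in $G$ (provided $u^* \ne b$, the generic case), forcing $u^* \in \tilde{S}'$ since $p', q' \in S$ and $r' \notin \tilde{S}'$. Tracing the consequences---most cleanly when $u^* = a$, so that the structural classification pins down a unique bad $r$, and in the remaining sub-cases by composing a second swap for $u^*$ and using irreducibility to handle the boundary case $u^* = b$ (where $br'$ may be an edge, turning the diamond into a $K_4$)---closes the safety check and produces the desired contradiction.
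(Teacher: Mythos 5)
There is a fatal gap in your classification of the obstruction $O$ through $a$. In case (i) you dismiss induced cycles on the grounds that both cycle-neighbours of $a$ lie in $B\cup X$ and hence force a chord or a triangle; this only disposes of the sub-cases where both neighbours lie in $B$ or both lie in $X$. The surviving sub-case --- one neighbour in $B$ (necessarily $b=\partial_{G-S}(B)$) and one in $N_S(a)\subseteq X$, these two being non-adjacent --- is precisely the one that actually occurs. Indeed, since $(G,S,k)$ is reduced with respect to \texttt{Small Set Branching}, \emph{every} obstruction of $G$ has at least four vertices outside $S$ (an obstruction meeting $S$ has at most three vertices outside $S$ and would trigger line~\ref{line:ifsmallobs}; one disjoint from $S$ would live in the block graph $G-S$). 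Hence an irreducible instance contains no diamond and no $C_4$ at all: your ``only surviving possibility'' $\{a,p,q,r\}$ with $p,q\in S$ is itself excluded by irreducibility (it has only $a,r$ outside $S$), the vertex $r$ you swap in does not exist, and the real obstructions --- induced cycles of length at least $5$ through $a$, $b$, and vertices outside $B\cup S$ --- are never addressed. A concrete irreducible instance: $S=\{x_1,x_2\}$ an edge, $G-S$ the path $a\,b\,c\,d$ with leaf block $B=\{a,b\}$, and $x_1$ adjacent to $a$ and $d$; then $a\,x_1\,d\,c\,b\,a$ is an induced $C_5$ through $a\in B\setminus\partial_{G-S}(B)$, and it is not a diamond.

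The statement is still true and your extremal framework could be repaired, but the correct local move is $a\mapsto b$, not $a\mapsto r$: one shows that every obstruction through $a$ must also pass through $b$, so $(\tilde{S}'\setminus\{a\})\cup\{b\}$ is again a solution with smaller intersection with $B\setminus\partial_{G-S}(B)$. The paper does this in one global step, taking $\tilde{S}'=(\tilde{S}\setminus B)\cup\partial_{G-S}(B)$ (size does not increase since $\abs{\partial_{G-S}(B)}\le 1$), and then arguing that any residual obstruction $C$ would contain a vertex of $B\setminus\partial_{G-S}(B)$ and a vertex outside $B\cup S$, hence, being $2$-connected and separated by the clique $N_S(B)\subseteq X$ (Lemma~\ref{lem:adjblock}), at least two adjacent vertices of $X$; this forces $C$ to be a diamond meeting $S$, contradicting \texttt{Small Set Branching}. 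That separator argument is the key idea missing from your proposal.
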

\begin{proof}
Consider a vertex set $\tilde{S}\subseteq V(G)\setminus S$ such that $G-\tilde{S}$ is a block graph. If $\tilde{S}\cap (B\setminus \partial_{G-S}(B))=\emptyset$, then the statement trivially holds. Hence, suppose $\tilde{S}\cap (B\setminus \partial_{G-S}(B))\neq \emptyset$ and let $\tilde{S}'=(\tilde{S} \setminus B)\cup \partial_{G-S}(B)$. We want to show that $\tilde{S}'$ is a vertex set claimed by the statement.

Clearly, we have $\tilde{S}'\cap (B\setminus \partial_{G-S}(B))=\emptyset$. As $B$ is a leaf block in $G-S$, we have $\abs{\partial_{G-S}(B)}\leq 1$, which implies $\abs{\tilde{S}'}\leq \abs{\tilde{S}}$. To see that $G-\tilde{S}'$ is a block graph, suppose the contrary and let $C$ be a vertex set of $G-\tilde{S}'$ which induces an obstruction. Since $G-\tilde{S}$ is a block graph, any obstruction $C$ in $G-\tilde{S}'$ must contain some vertex $u$ of $B\setminus \tilde{S}'=B\setminus \partial_{G-S}(B)$. Moreover, $C$ contains some vertex $v\notin B\cup S$ since $G[B\cup S]$ is a block graph by Lemma~\ref{lem:formblock}. Let $X$ be a block such that $N_S(B)\subseteq X$, which exists by Lemma~\ref{lem:adjblock}. Notice that $C$ is 2-connected and $X$ is a separator between $u$ and $v$ in $G-\tilde{S}'$. This implies that $C$ also contains at least two vertices of $X$. Then, the obstruction $C$ cannot be an induced cycle and thus is a diamond. This means that $G[X\cup\{u,v\}]$, thus $G[S\cup\{u,v\}]$, is not a block graph, contradicting to the assumption that $(G,S,k)$ is reduced with respect to \texttt{Small Set Branching}. This proves that $G-\tilde{S}'$ is a block graph.
\end{proof}

The following lemma states the correctness of \texttt{Bypass Rule} applied at lines~\ref{line:bypass1}-\ref{line:bypass3}.
\begin{LEM}
Let $(G,S,k)$ be an irreducible instance, $B$ be a leaf block of $G- S$, and $G'$ be the graph obtained by applying \texttt{Bypass Rule}. 
\begin{itemize}
\item If $\tilde{S}$ is a solution to $(G,S,k)$, then $\tilde{S}\setminus (B\setminus \partial_{G-S}(B))$ is a solution to $(G',S,k)$.
\item If $\tilde{S}'$ is a solution to $(G',S,k)$, then it is also a solution to $(G,S,k)$.
\end{itemize}
\end{LEM}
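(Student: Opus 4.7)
The plan is to verify both implications by analyzing where obstructions could arise in $G'$ and comparing them with obstructions in $G$, relying on the structural description of the interaction between the leaf block $B$ and $S$ furnished by Lemmata~\ref{lem:adjblock}, \ref{lem:formblock}, and~\ref{lem:represent}, together with the absence of \texttt{Small Set Branching} (which ensures $G[S\cup\{u,v,w\}]$ is a block graph for any three vertices $u,v,w\in V(G)\setminus S$).

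For the forward direction, I first invoke Lemma~\ref{lem:anothersol} to replace $\tilde{S}$ by an equivalent solution of the same or smaller size that avoids $B^\circ := B\setminus \partial_{G-S}(B)$; this is compatible with the statement because $\tilde{S}\setminus B^\circ$ can be recovered from the replacement. The task thus reduces to showing that $G'-\tilde{S}$ is a block graph under the additional assumption $\tilde{S}\cap B^\circ = \emptyset$. Suppose, for contradiction, that $G'-\tilde{S}$ contains an induced obstruction $C$. If $C$ uses no edge added by the Bypass Rule, then $G[C]$ is already an induced obstruction in $G-\tilde{S}$, a contradiction. Otherwise $b\in C$ and there exists some $y\in C\cap N_S(B)$ with $by\notin E(G)$. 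Since $N_S(B)$ lies in a single block of $G[S]$ by Lemma~\ref{lem:adjblock} (and hence forms a clique) and $C$ is induced, I can show $|C\cap N_S(B)|\le 1$. Letting $u\in B^\circ$ be the representative vertex from Lemma~\ref{lem:represent} satisfying $N_S(u) = N_S(B)$, I set $C^* := C\cup\{u\}$; since $N_G(u)\subseteq B\cup N_S(B)$ and $u\notin \tilde{S}$, a brief case analysis (on whether $C$ is an induced cycle, in which case $C^*$ is an induced cycle of length $|C|+1$, or a diamond, in which case $C^*$ contains an induced $C_4$ or a diamond) shows that $C^*$ induces an obstruction in $G-\tilde{S}$, contradicting the premise.

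For the backward direction, let $\tilde{S}'$ be a solution to $(G',S,k)$, so $\tilde{S}'\cap B^\circ = \emptyset$. Suppose $G-\tilde{S}'$ contains an induced obstruction $C$. When $C\cap B^\circ = \emptyset$, we have $C\subseteq V(G'-\tilde{S}')$ and the only possible discrepancy between $G[C]$ and $G'[C]$ is extra edges from $b$ to vertices in $C\cap N_S(B)$; a case argument mirroring the forward direction (once again using Lemma~\ref{lem:adjblock} and $|C\cap N_S(B)|\le 1$) shows $G'[C]$ still contains an induced obstruction, a contradiction. When $C\cap B^\circ \neq \emptyset$, Lemma~\ref{lem:formblock} forbids $C\subseteq B\cup S$, so $C$ contains a vertex outside $B\cup S$; since $B$ is a clique and $b$ is the only $V(G)\setminus S$ neighbor of $B^\circ$ outside $B$, the $2$-connected obstruction $C$ must traverse $B$ via $b$ and possibly via $N_S(B)$. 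Collapsing the $B^\circ$-segment of $C$ to $b$ and exploiting the new edges of $G'$ between $b$ and $N_S(B)$ yields an induced obstruction in $G'-\tilde{S}'$, again a contradiction.

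The main obstacle will be the case analysis for the lifting $C^* = C\cup\{u\}$ in the forward direction: one must verify carefully that no chord from $u$ to other vertices of $C$ arises in $G[C^*]$, which follows from $N_G(u)\subseteq B\cup N_S(B)$ together with $|C\cap N_S(B)|\le 1$ and $C\cap B^\circ = \emptyset$. Symmetrically, in the backward direction the delicate point is the collapsing argument, where Lemma~\ref{lem:represent} is used to guarantee that the cut vertex $b$ together with the new edges to $N_S(B)$ faithfully represents the connection patterns of $B^\circ$ within any obstruction.
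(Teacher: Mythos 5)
Your overall strategy is the same as the paper's: normalize the solution via Lemma~\ref{lem:anothersol}, then transfer obstructions between $G$ and $G'$ using the representative vertex $u$ with $N_S(u)=N_S(B)$ from Lemma~\ref{lem:represent}. However, there is a genuine gap in the step you lean on in both directions: the claim that $\abs{C\cap N_S(B)}\le 1$ ``since $N_S(B)$ forms a clique and $C$ is induced.'' That inference is false. An induced obstruction can contain two \emph{adjacent} vertices of a clique; in particular, in the forward direction a diamond $C$ in $G'-\tilde{S}$ may contain $b$ together with two vertices $x,x'\in N_S(B)$ (these three form a triangle in $G'$ because the Bypass Rule makes $b$ adjacent to all of $N_S(B)$). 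The paper treats exactly this case separately: if $\abs{C\cap N_S(B)}\ge 2$ then the triangle $\{b,x,x'\}$ forces $C$ to be a diamond, and one then checks that replacing $b$ by $u$ yields an obstruction of $G-\tilde{S}$. Your lifting $C^*=C\cup\{u\}$ silently assumes $u$ has only the two neighbours $b$ and $y$ in $C$, which is precisely what fails when $\abs{C\cap N_S(B)}=2$; so the case analysis you describe does not cover the situation that actually needs a new argument.

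The backward direction has a second, related omission. Before any ``collapsing'' of the $B^\circ$-segment can produce an induced cycle of length at least $4$ in $G'$, you must know that $G[C]$ is an induced cycle of length at least $5$ (and not a diamond or a $C_4$). The paper obtains this from the \texttt{Small Set Branching} reducedness together with the observation that $C\cap S\neq\emptyset$ (since $G-S$ is a block graph), so $\abs{C\setminus S}\le 3$; you cite the reducedness in your preamble but never apply it here. Moreover, in $G$ (unlike in $G'$) the vertex $b$ need not be adjacent to $N_S(B)$, so the bound $\abs{C\cap N_S(B)}=1$ cannot be obtained by your clique-plus-inducedness argument even for long cycles; the paper derives it by taking $x,y\in C\cap N_S(B)$, a vertex $w\in B\cap C$ adjacent to $x$ but not $y$, and exhibiting the diamond $\{u,w,x,y\}$ via Lemma~\ref{lem:represent} --- a step your ``mirroring'' does not reproduce. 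Finally, note that $\abs{C\cap B}\le 2$ (a cycle meets a clique in at most two vertices) is also needed to guarantee the collapsed cycle still has at least four vertices; this should be stated explicitly.
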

\begin{proof}
Let $b$ be the unique cut vertex of $G-S$ contained in $B$. Let us prove the first implication. Suppose that $\tilde{S}$ is a solution to $(G,S,k)$ such that $\tilde{S}\cap (B\setminus \partial_{G-S}(B))=\emptyset$. Such a solution exists by Lemma~\ref{lem:anothersol}. We show that $\tilde{S}$ is a solution to $(G',S,k)$, from which the first implication follows. If $b\in \tilde{S}$, then $G'-\tilde{S}$ is clearly a block graph as it is an induced subgraph of $G-\tilde{S}$. Let us consider the case when $b\notin \tilde{S}$. For the sake of contradiction, suppose that $G'-\tilde{S}$ contains a vertex set $C$ inducing an obstruction. 
Consider a vertex $u\in B$ such that $N_S(u)=N_S(B)$. The existence of such $u$ is shown in Lemma~\ref{lem:represent}. Note that $u\neq b$ and there exists $x\in N_S(B)$ such that $bx\notin E(G)$ and $bx$ is contained in $C$, otherwise, $C$ also appears in $G-\tilde{S}$. 
If $C$ contains one more vertex from $N_S(B)$, then $C$ should be a diamond with two intersections on $N_S(B)$ in $G'-\tilde{S}$. Then $G[V(C)\setminus \{b\}\cup \{u\}]$ is a diamond of $G-\tilde{S}$, which is a contradiction. Thus, $\abs{V(C)\cap N_S(B)}=1$ and $G[V(C)\cup \{u\}]$ induces a subgraph isomorphic to a graph obtained from $C$ by subdividing one edge. It contains an obstruction in $G-\tilde{S}$, which contradicts to our assumption.

We establish the second implication. Suppose that $\tilde{S}'$ is a solution to $(G',S,k)$, but $G-\tilde{S}'$ is not a block graph. Let $C$ be a vertex set inducing an obstruction in $G-\tilde{S}'$. Then $G[C]$ is not a diamond nor a cycle of length 4 since otherwise, $G[C\cup S]$ is not a block graph and $\abs{C\setminus S}\leq 3$, contradicting to the assumption that $(G,S,k)$ is reduced with respect to \texttt{Small Set Branching}. Therefore $G[C]$ must be an induced cycle of length at least 5. Notice that $C$ contains some vertex $v\notin B\cup S$ since $G[B\cup S]$ is a block graph by Lemma~\ref{lem:formblock}. There are two possibilities, and in each case we derive a contradiction.

\medskip
\subparagraph{When $b \notin C$:} Notice that $N_S(B)\cap C$ is a separator between $B\cap C$ and $v$ in $G[C]$, and thus contains a minimal separator between $B\cap C$ and $v$. However, $N_S(B)$ is a complete graph by Lemma~\ref{lem:adjblock} while any minimal separator in an induced cycle must be non-adjacent, a contradiction.

\medskip
\subparagraph{When $b\in C$:} Observe that there is a vertex $x\in N_S(B)\cap C$ such that $x$ is adjacent with some vertex, say $w$, in $B\cap C$. We claim that $N_S(B)\cap C=\{x\}$. Suppose not, and let $y$ be a vertex in $(N_S(B)\cap C)\setminus \{x\}$. The existence of $v\in C\setminus (B\cup S)$ implies $wy\notin E(G)$. Take $u\in B$ such that $N_S(u)=N_S(B)$, which is possible due to Lemma~\ref{lem:represent}, and observe that $ux,uy\in E(G)$. It follows that $G[\{u,w,x,y\}]$ is a diamond, contradicting to the assumption that $(G,S,k)$ is reduced with respect to \texttt{Small Set Branching}. From $\{x\}\subseteq N_S(B)\cap C$, our claim follows. 
Notice that $\abs{C\cap B}\leq 2$ since an induced cycle can intersect with a clique in at most two vertices. Therefore, $(C\setminus B)\cup\{b\}$ has at least four vertices. Also $G'[(C\setminus B)\cup \{b\}]$ is an induced cycle as no chord can be added in the construction of $G'$ from $G$. This contradicts to the assumption that $G'-\tilde{S}'$ is a block graph. This completes the proof of the lemma.
\end{proof}

\begin{LEM}\label{lem:disjointalgo}
Given an instance $(G,S,k)$ to \disjointBLOC\ with $n=|V(G)|$, the algorithm {\bf Block}$(G,S,k)$ correctly returns a solution or outputs {\sc No} in time $O(3^{k+\ell}\cdot n^6)$.
\end{LEM}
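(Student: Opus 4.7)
Most of the correctness work is already in hand from the preceding discussion: the base cases at lines~\ref{line:yes}--\ref{line:no} are immediate, \texttt{Small Set Branching} is justified because any solution $\tilde{S}$ must contain at least one vertex of a forbidden triple $u,v,w$, \texttt{Component Branching} exhausts the possibilities for $\tilde{S}\cap\{u,v\}$ while using that $G[S\cup\{u,v\}]$ is already a block graph (an induced subgraph of $G[S\cup\{u,v,w\}]$, which is a block graph since line~\ref{line:ifsmallobs} has failed), and the safety of \texttt{Bypass Rule} is exactly the content of the lemma just proved. Thus my plan for this lemma is to establish the claimed running time.

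For the branching analysis I will use the potential $\mu := k+\ell$. In each of the three branches of \texttt{Small Set Branching} the budget $k$ drops by one while $S$, and hence $\ell$, is unchanged. The first two branches of \texttt{Component Branching} behave identically. In its third branch, $k$ is preserved but $u,v$ are added to $S$: since $uv\in E(G)$ and $u,v$ jointly have neighbors $x,y$ in two distinct components of $G[S]$, the path $x$-$u$-$v$-$y$ lies inside $G[S\cup\{u,v\}]$ and merges those two components, so $\ell$ drops by at least one and hence $\mu$ drops. Therefore the search tree has at most $3^{k+\ell}$ leaves.

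The \texttt{Bypass Rule} is non-branching and does not decrease $\mu$, so termination along a chain of bypasses needs a separate measure. A leaf block $B$ of $G-S$ necessarily contains at least one vertex outside $\partial_{G-S}(B)=\{b\}$, all of which are removed when passing to $G'$, so $|V(G)\setminus S|$ drops by one per Bypass. Consequently, at most $n$ consecutive Bypass calls occur on any root-to-leaf path of the search tree between two branching events. The polynomial work per call is dominated by the scan at line~\ref{line:ifsmallobs}, where one iterates over triples $u,v,w\in V(G)\setminus S$ and tests $G[S\cup\{u,v,w\}]$ for being a block graph, costing $O(n^5)$; block-graph recognition, finding an edge meeting the condition of line~\ref{line:ifcomponent} (after precomputing components of $G[S]$), and constructing $G'$ are all cheaper. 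Multiplying the $O(n^5)$ per-call cost by the length-$n$ bypass chains and by the $3^{k+\ell}$ leaves yields the claimed bound $O(3^{k+\ell}\cdot n^6)$.

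The one place in this plan where a little care is needed is the guarantee that $\ell$ drops in the third branch of \texttt{Component Branching}; this is exactly why the algorithm demands $x$ and $y$ lying in \emph{distinct} components of $G[S]$, and it is the only step that is not essentially bookkeeping.
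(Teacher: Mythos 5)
Your proposal is correct and follows essentially the same route as the paper: correctness is delegated to the preceding discussion and the \texttt{Bypass Rule} lemma, and the running time is obtained via the measure $k+\ell$ (with $\ell$ dropping in the third branch of \texttt{Component Branching} because the two components of $G[S]$ are merged), a per-node cost of $O(n^5)$ dominated by the scan at line~\ref{line:ifsmallobs}, and a bound of $O(3^{k+\ell}\cdot n)$ on the search-tree size since non-branching \texttt{Bypass} calls strictly decrease the number of vertices. Your added justification that the $x$--$u$--$v$--$y$ path merges the two components is a slightly more explicit version of what the paper leaves implicit; no gaps.
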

\begin{proof}
The correctness of the algorithm is discussed above. We show that {\bf Block}$(G,S,k)$ has the claimed running time. The recursive execution of {\bf Block}$(G,S,k)$ can be depicted as a search tree ${\cal T}$, where each tree node corresponds to a call of the procedure {\bf Block}. It is easy to verify that {\bf Block}$(G,S,k)$ takes $O(n^5)$-time at each tree node: testing whether an $n$-vertex graph is a block graph can be done in time $O(n^2)$, and at line~\ref{line:ifsmallobs} there can be at most $O(n^3)$ such tests. Therefore, it suffices to bound the size of the search tree in order to establish the running time. For an instance $(G,S,k)$, we associate a measure $k+\ell$, where $\ell$ is the number of connected components in $G[S]$. Whenever {\bf Block}$(G,S,k)$ corresponds to a branching node in ${\cal T}$ (i.e. having at least two children), in each branching either $k$ or $\ell$ strictly decreases by at least 1. As $k+\ell \geq 0$ at any tree node, the number of branching nodes in any path from the root to a leaf is at most $k+\ell$. This bounds the number of leaves in ${\cal T}$ by $3^{k+\ell}$. The length of a longest path in ${\cal T}$ is at most $n+k+\ell$ since each recursive call either decrease $k+\ell$, or reduces the number of vertices by applying \texttt{Bypass Rule}. Therefore, the size of ${\cal T}$ is at most $O(3^{k+\ell}\cdot n)$ and {\bf Block}$(G,S,k)$ runs in time $O(3^{k+\ell}\cdot n^6)$. 
\end{proof}

Finally, to solve \BLOC, we apply the standard iterative compression technique. 
Together with the algorithm {\bf Block} for \disjointBLOC\ and its analysis given in Lemma~\ref{lem:disjointalgo}, we obtain an FPT algorithm stated in Theorem~\ref{thm:main2}. 

\begin{proof}[Proof of Theorem~\ref{thm:main2}]
We apply the standard iterative compressing technique. The algorithm involves two-step reduction of \BLOC: we first reduce \BLOC\ to {\sc Compression} problem, which reduces to \disjointBLOC. 

Fix an arbitrary labeling $v_1,\ldots ,v_n$ of $V(G)$ and let $G_i$ be the graph $G[\{v_1,\ldots , v_i\}]$ for $1\leq i\leq n$. From $i=1$ up to $n$, we consider the following {\sc Compression} Problem for \BLOC: given a graph $G_i$ and $S_i\subseteq V(G_i)$ such that $G_i-S_i$ is a block graph and $\abs{S_i}\leq k+1$, we aim to find a set $S_i'\subseteq V(G_i)$ such that $G_i-S_i'$ is a block graph and $\abs{S_i'}\leq k$, if one exists, and output \NO\ otherwise. Since block graphs are closed under taking induced subgraphs, $(G,k)$ is a \YES-instance of \BLOC\ if and only if $(G_i,S_i)$ is a \YES-instance for the {\sc Compression} for all $i$, where $(G_i,S_i)$ is a legitimate instance. Hence, we can correctly output that $(G,S)$ is a \NO-instance of \BLOC\ if $(G_i,S_i)$ is a \NO-instance for some $i$. Moreover, if $S'_i$ is a solution to the $i$-th instance of {\sc Compression}, then $(G_{i+1},S'_i\cup\{v_{i+1})$ is a legitimate instance for the $(i+1)$-th instance of {\sc Compression}. 

Given an instance $(G,S)$ of {\sc Compression}, we enumerate all possible intersections $I$ of $S$ and a desired solution to $(G,S)$. For each guessed set $I$, we solve the instance $(G-I,S\setminus I,k-\abs{I})$ to \disjointBLOC\ using the algorithm {\bf Block}. Note that $(G,S)$ is a \YES-instance if and only if  $(G-I,S\setminus I,k-\abs{I})$ is a \YES-instance for some $I\subseteq S$. If $\tilde{S}$ is a solution to $(G-I,S\setminus I,k-\abs{I})$, then $\tilde{S}\cup I$ is a solution to $(G,S)$ for {\sc Compression}. Conversely, if there is a solution $\tilde{S}$ to $(G,S)$, for the set $I=\tilde{S}\cap S$ the instance $(G-I,S\setminus I, k-\abs{I})$ is \YES\ for \disjointBLOC. Therefore, using the algorithm {\bf Block} for \disjointBLOC, we can correctly solve \BLOC.

It remains to prove the complexity of the algorithm. Given an instance $(G,S)$, we guess at most ${k+1 \choose i}$ sets $I$ of size $i$ for each $1\leq i\leq k$, and solve the resulting instance $(G-I,S\setminus I,k-\abs{I})$ of \disjointBLOC\ in time $O(3^{k-i+\ell}\cdot n^6)=O(9^{k-i}\cdot n^6)$. Here we use the fact that the number of connected components in $G[S-I]$ is bounded by $\abs{S-I}$. Summing up, \BLOC\ can be solved by running an algorithm for {\sc Compression} at most $n$ times, which yields the claimed running time \[n\cdot \sum_{i=0}^{k}{k+1 \choose i}\cdot O(9^{k-i}\cdot n^6)= O(10^k\cdot n^7).\qedhere\]
\end{proof}

%

\begin{thebibliography}{10}

\bibitem{AgrawalKLS2015}
A.~Agrawal, S.~Kolay, D.~Lokshtanov, and S.~Saurabh.
\newblock A faster fpt algorithm and a smaller kernel for block graph vertex
  deletion, 2015.
\newblock arXiv.org/abs/1510.08154.

\bibitem{Bodlaender91}
H.~L. Bodlaender.
\newblock On disjoint cycles.
\newblock In {\em Proceedings of the 17th International Workshop}, WG '91,
  pages 230--238, London, UK, UK, 1992. Springer-Verlag.

\bibitem{CaoCL10}
Y.~Cao, J.~Chen, and Y.~Liu.
\newblock On feedback vertex set: New measure and new structures.
\newblock {\em Algorithmica}, pages 1--24, 2014.

\bibitem{CaoMc14}
Y.~Cao and D.~Marx.
\newblock Chordal editing is fixed-parameter tractable.
\newblock In {\em 31st {I}nternational {S}ymposium on {T}heoretical {A}spects
  of {C}omputer {S}cience}, volume~25 of {\em LIPIcs. Leibniz Int. Proc.
  Inform.}, pages 214--225. Schloss Dagstuhl. Leibniz-Zent. Inform., Wadern,
  2014.

\bibitem{ChenFLL08}
J.~Chen, F.~V. Fomin, Y.~Liu, S.~Lu, and Y.~Villanger.
\newblock Improved algorithms for feedback vertex set problems.
\newblock {\em J. Comput. System Sci.}, 74(7):1188--1198, 2008.

\bibitem{Courcelle90}
B.~Courcelle.
\newblock The monadic second-order logic of graphs. {I}. {R}ecognizable sets of
  finite graphs.
\newblock {\em Inform. and Comput.}, 85(1):12--75, 1990.

\bibitem{CourcelleMR00}
B.~Courcelle, J.~A. Makowsky, and U.~Rotics.
\newblock Linear time solvable optimization problems on graphs of bounded
  clique-width.
\newblock {\em Theory Comput. Syst.}, 33(2):125--150, 2000.

\bibitem{CyganNPP11}
M.~Cygan, J.~Nederlof, M.~Pilipczuk, M.~Pilipczuk, J.~M.~M. van Rooij, and
  J.~O. Wojtaszczyk.
\newblock Solving connectivity problems parameterized by treewidth in single
  exponential time (extended abstract).
\newblock In {\em 2011 {IEEE} 52nd {A}nnual {S}ymposium on {F}oundations of
  {C}omputer {S}cience---{FOCS} 2011}, pages 150--159. IEEE Computer Soc., Los
  Alamitos, CA, 2011.

\bibitem{CyganPPW2010}
M.~Cygan, M.~Pilipczuk, M.~Pilipczuk, and J.~Wojtaszczyk.
\newblock An improved fpt algorithm and quadratic kernel for pathwidth one
  vertex deletion.
\newblock In V.~Raman and S.~Saurabh, editors, {\em Parameterized and Exact
  Computation}, volume 6478 of {\em Lecture Notes in Computer Science}, pages
  95--106. Springer Berlin Heidelberg, 2010.

\bibitem{CyganPPO2013}
M.~Cygan, M.~Pilipczuk, M.~Pilipczuk, and J.~O. Wojtaszczyk.
\newblock Subset feedback vertex set is fixed-parameter tractable.
\newblock {\em SIAM Journal on Discrete Mathematics}, 27(1):290--309, 2013.

\bibitem{DehneFLR07}
F.~Dehne, M.~Fellows, M.~Langston, F.~Rosamond, and K.~Stevens.
\newblock An ${O}(2^{O(k)}n^3)$ fpt algorithm for the undirected feedback
  vertex set problem.
\newblock {\em Theory of Computing Systems}, 41(3):479--492, 2007.

\bibitem{DowneyF93}
R.~Downey and M.~Fellows.
\newblock Fixed-parameter tractability and completeness. {III}. {S}ome
  structural aspects of the {$W$} hierarchy.
\newblock In {\em Complexity theory}, pages 191--225. Cambridge Univ. Press,
  Cambridge, 1993.

\bibitem{FominLMS12}
F.~V. Fomin, D.~Lokshtanov, N.~Misra, and S.~Saurabh.
\newblock Planar {F}-deletion: Approximation, kernelization and optimal {FPT}
  algorithms.
\newblock In {\em 53rd Annual {IEEE} Symposium on Foundations of Computer
  Science, {FOCS} 2012, New Brunswick, NJ, USA, October 20-23, 2012}, pages
  470--479, 2012.

\bibitem{Gallai1961}
T.~Gallai.
\newblock Maximum-minimum {S}\"atze und verallgemeinerte {F}aktoren von
  {G}raphen.
\newblock {\em Acta Math. Acad. Sci. Hungar.}, 12:131--173, 1961.

\bibitem{GuoGHNW06}
J.~Guo, J.~Gramm, F.~H{\"u}ffner, R.~Niedermeier, and S.~Wernicke.
\newblock Compression-based fixed-parameter algorithms for feedback vertex set
  and edge bipartization.
\newblock {\em Journal of Computer and System Sciences}, 72(8):1386 -- 1396,
  2006.

\bibitem{HopcroftT1973}
J.~Hopcroft and R.~Tarjan.
\newblock Algorithm 447: Efficient algorithms for graph manipulation.
\newblock {\em Commun. ACM}, 16(6):372--378, June 1973.

\bibitem{KanjPS04}
I.~Kanj, M.~Pelsmajer, and M.~Schaefer.
\newblock Parameterized algorithms for feedback vertex set.
\newblock In R.~Downey, M.~Fellows, and F.~Dehne, editors, {\em Parameterized
  and Exact Computation}, volume 3162 of {\em Lecture Notes in Computer
  Science}, pages 235--247. Springer Berlin Heidelberg, 2004.

\bibitem{KimK2015}
E.~J. Kim and O.~Kwon.
\newblock {A Polynomial Kernel for Block Graph Deletion}.
\newblock In T.~Husfeldt and I.~Kanj, editors, {\em 10th International
  Symposium on Parameterized and Exact Computation (IPEC 2015)}, volume~43 of
  {\em Leibniz International Proceedings in Informatics (LIPIcs)}, pages
  270--281, Dagstuhl, Germany, 2015. Schloss Dagstuhl--Leibniz-Zentrum fuer
  Informatik.

\bibitem{KLPRRSS13}
E.~J. Kim, A.~Langer, C.~Paul, F.~Reidl, P.~Rossmanith, I.~Sau, and S.~Sikdar.
\newblock Linear kernels and single-exponential algorithms via protrusion
  decompositions.
\newblock In {\em Automata, Languages, and Programming - 40th International
  Colloquium, {ICALP} 2013, Riga, Latvia, July 8-12, 2013, Proceedings, Part
  {I}}, pages 613--624, 2013.

\bibitem{Kociumaka14}
T.~Kociumaka and M.~Pilipczuk.
\newblock Faster deterministic feedback vertex set.
\newblock {\em Information Processing Letters}, 114(10):556 -- 560, 2014.

\bibitem{Marx2010}
D.~Marx.
\newblock Chordal deletion is fixed-parameter tractable.
\newblock {\em Algorithmica}, 57(4):747--768, 2010.

\bibitem{MisraPRS2012}
N.~Misra, G.~Philip, V.~Raman, and S.~Saurabh.
\newblock On parameterized independent feedback vertex set.
\newblock {\em Theoretical Computer Science}, 461(0):65 -- 75, 2012.
\newblock 17th International Computing and Combinatorics Conference (COCOON
  2011).

\bibitem{Oum05}
S.~Oum.
\newblock Rank-width and vertex-minors.
\newblock {\em J. Combin. Theory Ser. B}, 95(1):79--100, 2005.

\bibitem{RamanSS02}
V.~Raman, S.~Saurabh, and C.~R. Subramanian.
\newblock Faster fixed parameter tractable algorithms for undirected feedback
  vertex set.
\newblock In {\em Algorithms and computation}, volume 2518 of {\em Lecture
  Notes in Comput. Sci.}, pages 241--248. Springer, Berlin, 2002.

\bibitem{ReedSV2004}
B.~Reed, K.~Smith, and A.~Vetta.
\newblock Finding odd cycle transversals.
\newblock {\em Operations Research Letters}, 32(4):299 -- 301, 2004.

\bibitem{RS1986a}
N.~Robertson and P.~D. Seymour.
\newblock Graph minors. {II}. {A}lgorithmic aspects of tree-width.
\newblock {\em J. Algorithms}, 7(3):309--322, 1986.

\bibitem{Thomasse2009}
S.~Thomass{\'{e}}.
\newblock A quadratic kernel for feedback vertex set.
\newblock In C.~Mathieu, editor, {\em Proceedings of the Twentieth Annual
  {ACM-SIAM} Symposium on Discrete Algorithms, {SODA} 2009, New York, NY, USA,
  January 4-6, 2009}, pages 115--119. {SIAM}, 2009.

\end{thebibliography}

\end{document}